\documentclass[final]{amsart}

\usepackage{amsthm,amsmath,amsfonts,amssymb}
\usepackage{graphicx}
\usepackage[comma, authoryear]{natbib}
\usepackage[colorlinks,citecolor=blue,urlcolor=blue]{hyperref}
\usepackage[utf8]{inputenc}
\usepackage{multirow}
\usepackage{mathtools}
\usepackage{ulem}
\usepackage{setspace}
\usepackage{pdflscape}
\linespread{1.5}

\textwidth 6.25in
\evensidemargin 0.5in
\oddsidemargin 0.5in

\usepackage[color,notref,notcite]{showkeys}
\usepackage{enumerate}
\definecolor{labelkey}{rgb}{0.6,0,1}
%\usepackage[french]{babel}
%\usepackage[T1]{fontenc}

%\usepackage[pagewise]{lineno}
%\linenumbers

\theoremstyle{plain}% Theorem-like structures provided by amsthm.sty
\newtheorem{thm}{Theorem}
\newtheorem{prop}{Proposition}
\newtheorem{lem}{Lemma}
\newtheorem{corollary}{Corollary}
\newtheorem{ass}{Assumption}

\theoremstyle{definition}

\newtheorem{remark}{Remark}
\newtheorem{example}{Example}

%%%%%%%%%%%%%%%%
% COMMENTAIRES %
%%%%%%%%%%%%%%%%

\def\disp{\displaystyle}

\newcommand{\dC}{\ensuremath{\mathbb{C}}}

\newcommand{\dG}{\ensuremath{\mathbb{G}}}

\newcommand{\I}{\ensuremath{\mathbb{I}}}

\newcommand{\dM}{\ensuremath{\mathbb{M}}}
\newcommand{\dN}{\ensuremath{\mathbb{N}}}

\newcommand{\dR}{\ensuremath{\mathbb{R}}}

\newcommand{\dZ}{\ensuremath{\mathbb{Z}}}

\newcommand{\cA}{\ensuremath{\mathcal{A}}}
\newcommand{\cB}{\ensuremath{\mathcal{B}}}
\newcommand{\cC}{\ensuremath{\mathcal{C}}}

\newcommand{\cG}{\ensuremath{\mathcal{G}}}

\newcommand{\cI}{\ensuremath{\mathcal{I}}}

\newcommand{\cK}{\ensuremath{\mathcal{K}}}
\newcommand{\cL}{\ensuremath{\mathcal{L}}}
\newcommand{\cM}{\ensuremath{\mathcal{M}}}
\newcommand{\cN}{\ensuremath{\mathcal{N}}}
\newcommand{\cO}{\ensuremath{\mathcal{O}}}

\newcommand{\cR}{\ensuremath{\mathcal{R}}}
\newcommand{\cS}{\ensuremath{\mathcal{S}}}

\def\i1{ [-\infty,\infty]}

\def\cor{ {\rm cor~}}
\def\cov{ {\rm cov~}}
\def\var{ {\rm var~}}
\def\card{ {\rm card~}}
\def\sn{n^{1/2} }
\def\sni{n^{-1/2} }

\def\bs{\mathbf{s}}

\def\bu{\mathbf{u}}
\def\bv{\mathbf{v}}
\def\bx{\mathbf{x}}

\def\bF{\mathbf{F}}
\def\bG{\mathbf{G}}

\def\bK{\mathbf{K}}

\def\bR{\mathbf{R}}

\def\bU{\mathbf{U}}
\def\bV{\mathbf{V}}
\def\bX{\mathbf{X}}

\def\bdelta{{\boldsymbol\delta}}

\def\btheta{{\boldsymbol\theta}}

\def\setd{\{1,\ldots,d\}}
\def\setn{\{1,\ldots,n\}}

\def\setp{\{1,\ldots,p\}}

\def\unif{{\rm U}(0,1)}

\newcommand{\BB}[1]{\textcolor{black}{#1}}

\begin{document}

\title[Tests of independence and randomness for arbitrary data]{Tests of independence and randomness for arbitrary data using copula-based covariances}

\author{Bouchra R. Nasri}
\address{Département de médecine sociale et préventive, École de santé publique, Université de Montréal, C.P. 6128, succursale Centre-ville
Montréal (Québec)  H3C 3J7 }
\email{bouchra.nasri@umontreal.ca}

\author{Bruno N. R\'{e}millard}
\address{GERAD and Department of Decision Sciences, HEC Montr\'{e}al\\
3000, che\-min de la C\^{o}\-te-Sain\-te-Ca\-the\-ri\-ne,
Montr\'{e}al (Qu\'{e}\-bec), Canada H3T 2A7}
\email{bruno.remillard@hec.ca}

\thanks{Funding in partial support of this work was provided by the Fonds
qu\'e\-b\'e\-cois de la re\-cher\-che en sant\'e and the Natural Sciences and Engineering Research Council of Canada.}

%\date{\today}

\begin{abstract}
In this article, we study tests of independence for data with arbitrary distributions in the non-serial case, i.e., for independent and identically distributed random vectors, as well as in the serial case, i.e., for time series.  These tests are derived from copula-based  covariances and their multivariate extensions using M\"obius transforms.
 We find the asymptotic distributions of these statistics under the null hypothesis of independence or randomness, as well as under contiguous alternatives. This enables us to find out locally most powerful test statistics for some alternatives, whatever the margins.  Numerical experiments are performed for Wald's type combinations of these statistics to assess the finite sample performance.
\end{abstract}

\keywords{Independence; randomness; multilinear copula;  Spearman's rho, van der Waerden's coefficient; Savages's coefficient}

%\subjclass{Primary 60F05;  Secondary 62E20}

%\end{frontmatter}

\maketitle

%\tableofcontents
\section{Introduction}
\label{sec:intro}

In many cases, \BB{tests of independence} using simple dependence measures  like Kendall's tau and Spearman's rho, perform as well as tests based on more complex statistics  constructed from empirical processes \citep{Blum/Kiefer/Rosenblatt:1961}, and are generally much faster to compute. However, tests based of such measures are not always consistent.
Nevertheless, tests of independence or randomness based on copulas should always be performed. Here, we are interested in copula-based   of dependence measures for two cases: (i)  the non-serial case, where we have a sample of iid observations  $\bX_i = (X_{i1},\ldots,X_{id})\sim H$, with vector of margins $\bF=(F_1,\ldots, F_d)$, $d\ge 2$, and we want to test the null hypothesis of independence, i.e., $H_0: H(x_1,\ldots,x_d) = \prod_{j=1}^d F_j(x_j)$, for all $\bx=(x_1,\ldots, x_d)\in \dR^d$; (ii) the serial case, i.e., we have a stationary time series $(Y_t)_{t\ge 1}$ with common margin $F$, and we want to test the null hypothesis of randomness, i.e., $H_0: P(Y_1\le y_1, \ldots, Y_n\le y_n) = \prod_{t=1}^n F(y_t)$ for all $(y_1,\ldots, y_n)\in \dR^n$, $n\ge 2$. In the serial case, one defines the random vectors $\bX_t = (Y_t,\ldots, Y_{t+1-d})$, and hereafter,  the series $Y$ is extended in a circular way by setting $Y_{t+n}=Y_t$ for all $t\in\dZ$.

In the bivariate case, when the margins are continuous, many copula-based  dependence measures  are theoretically defined as the correlation $\varrho_{\bK}(C) = \cor\left\{K_1^{-1}(U_{i1}), K_2^{-1}(U_{i2})\right\}$,  since by continuity of the margins, $\bU_i = (U_{i1},U_{i2}) = \bF(\bX_i) \sim C$, for a unique  copula $C$. Here, $\bK=(K_1,K_2)$ is a given vector of cdfs, with mean $\mu_1,\mu_2$,  and \BB{finite} variances $\sigma_1^2,\sigma_2^2$.  These requirements on $K_1,K_2$ are necessary in order for the correlation to exist. The value under independence is clearly $0$. Next, by Hoeffding's identity \citep{Hoeffding:1940},
\begin{equation}
    \label{eq:rhocov}
    \varrho_{\bK}(C) = \frac{\gamma_\bK(C)}{\sigma_1\sigma_2} = \frac{1}{\sigma_1\sigma_2} \int_{\dR^2}\left[ C\left\{K_1(x_1),K_2(x_2)\right\}-K_1(x_1)K_2(x_2)\right]dx_1 dx_2,
\end{equation}
since $\left( K_1^{-1}(U_1), K_2^{-1}(U_2)\right) $ has joint cdf $C\circ \bK$.
For example, suppose that $(U_1,U_2)\sim C$. Then,  taking $K_1=K_2=D$, where $D$ is the cdf of the uniform distribution over $(0,1)$, one obtains  Spearman's correlation $\rho_S(C) = 12E(U_1U_2)-3$. The case $K_1=K_2 = \Phi$, where $\Phi$ is the cdf of the standard Gaussian distribution, yields the van der Waerden coefficient $\rho_{vdw}(C)  =E\left\{\Phi^{-1}(U_1)\Phi^{-1}(U_2)\right\}$. Savage's coefficient corresponds to $K_j(x) \equiv  1-e^{-x}$, $\mu_j\equiv 1$, with the convention that $0\log{0}=0$, while if $K_1=K_2$ is the cdf of a Bernoulli(1/2), one gets Blomqvist's coefficient $ 4C(1/2,1/2)-1$.
Note that by definition, when $K_1=K_2$,
$\varrho_K(C_+)=1$ for the complete dependence, where $C_+(u_1,u_2)=\min(u_1,u_2)$ is the Fr\'echet-Hoeffding upper bound.
However, when $K_1\neq K_2$, the covariance $\gamma_{K_1,K_2}(C)$ must be divided by $E\left\{ K_1^{-1}(U_1)K_2^{-1}(U_1)\right\}  -\mu_1\mu_2$ to give $1$ for complete dependence.  Blest's coefficient \citep{Blest:2000,Genest/Plante:2003} can be seen as such an example if one considers a natural modification. In fact, Blest's coefficient  has been originally defined as the covariance  between $(1-U_1)^2$ and $U_2$. An obvious modification is obtained by taking $K_1(u)=u^{1/2}$, $u\in [0,1]$, $K_2=D$, so the modified coefficient is $12E(U_1^2 U_2)-2$, normalised to give $1$ for complete dependence. \cite{Genest/Plante:2003} also proposed a symmetrised Blest's coefficient, which, in our general setting, amounts to defining
 $$
 \gamma_{K_1,K_2}^*(C) = \dfrac{\gamma_{K_1,K_2}(C)+ \gamma_{K_2,K_1}(C)}{2}.
 $$
 Not all copula-based dependence measures  are defined by a covariance (up to a constant), a well-known example being  Kendall's tau, defined by
 $\disp \tau(C) = 4\int_{(0,1)^2}  C(u_1,u_2) dC(u_1,u_2)-1 = 4E\{C(U_1,u_2)\}-1$.
 However, Kendall's tau and Spearman's rho have an equivalent limiting distribution, even under a sequence of contiguous alternatives. There are also other interesting copula-based  dependence measures, namely $\phi$-dependence measures, recently studied in \cite{Geenens/LafayedeMicheaux:2022}, and defined by $\disp \int_0^1\int_0^1 \phi\{c(u_1,u_2)\}du_2du_2$, where $\phi$ is a convex function with $\phi(1)=0$. The case $\phi(t) = t\log{t}$ yields mutual information, while $\phi(t) = \left(t^{1/2}-1\right)^2$ yields Hellinger's correlation.
%  $$
% For example, for the Tent Map copula $C_*$, $C(U_1,U_2) = \min(U_1,1-U_1)$ which is uniformly distributed on $(0,1/2)$. Hence, $\tau(C_*) = 0$.
% Note that all these dependence measure extend easily to the serial case, by considering $\bX_t = (Y_t,Y_{t-\ell})$ for any lag $\ell\ge 1$. \\

Estimating the dependence measures defined previously is relatively straightforward. In the continuous case,  the copula is replaced by the empirical copula
$$
\widehat C_n(u_1,u_2) = n^{-1}\sum_{i=1} \I\left\{\frac{n}{n+1}F_{n1}(X_{i1})\le u\right\}\I\left\{\frac{n}{n+1}F_{n2}(X_{i2})\le u_2\right\},
\quad u_1,u_2\in [0,1],
$$
where $\disp F_{nj}(x) = n^{-1}\sum_{i=1}^n \I\{X_{ij}\le x\}$, $x\in\dR$, $j\in \{1,2\}$.
In fact,  $\gamma_{\bK}(C)$ can be estimated by $\gamma_{\bK}\left(\widehat C_n\right)$, and according to \cite{Genest/Remillard:2004}, one has
$$
\gamma_{\bK}\left(\widehat C_n\right) = \int_{\dR^2} \left[ \widehat C_n\left\{K_1(x_1),K_2(x_2)\right\}-K_1(x_1)K_2(x_2)\right]dx_1 dx_2.
$$
Asymptotic limits and their representations are easier to work with the latter representation, being a linear functional of the empirical process $\widehat \dC_n(u_1,u_2) = \sn \left\{\hat C_n(u_1,u_2) - u_1 u_2\right\}$.
Tests of independence based on these dependence measures work well when the margins are continuous. However, for applications, there is a need to test independence in the more general setting of arbitrary distributions, i.e., when at least of the margins is not continuous. In this case, since there are ties, one might be tempted to replace the ranks by the mid-ranks. However, the asymptotic distribution might not be simple enough and it makes sense to test independence using copula-based extensions of these dependence measures.

The main problem here is that the copula is not unique. If $\bX\sim H$, there are infinitely many copulas satisfying Sklar's equation
$H = C\circ \bF$ \citep{Sklar:1959}. To construct solutions for this equation, for any copula $\cC$, take $\bV\sim \cC$ independent of $\bX\sim H$ and set
$\bU = \psi_{\bF}(\bX,\bV) $, where $U_j = \psi_{F_j}(X_j,V_j) = F_j(X_j-)+V_j \Delta_{F_j}(X_i)$, with
$F_j(x-)=P(X_j < x)$ and  $\Delta_{F_j}(x) = F_j(x)-F_j(x-)=P(X_j=x)$, $j\in \setd$. It is known \citep{Ferguson:1967,Ruschendorf:1981, Neslehova:2007, Brockwell:2007}
that for any $j\in\setd$,  $U_j\sim \unif$, and the joint cdf  $C_\cC$ of $\bU$ is a copula satisfying Sklar's equation.
In addition, there is one interesting copula $C^\maltese$ in this family, the so-called multilinear copula, obtained by taking $\cC=\Pi$, the independence copula, i.e., $\Pi(\bu)=\prod_{j=1}^d D(u_j)$. One interesting property of $C^\maltese$ is that
if  $H(\bx)=\prod_{j=1}^d F_j(x_j)$, then $C_\cC=\Pi$ if and only if $C_\cC=C^\maltese$.
As a by-product, taking the empirical joint cdf $H_n$ with the vector of margins $\bF_n=(F_{n_1},F_{n2})$, one obtains the empirical multilinear copula
$\widehat C_n^\maltese$, for which an explicit expression will be given in the next section.
Note that contrary to $\widehat C_n$, $\widehat C_n^\maltese$ is a genuine copula, so all dependence measures presented before can
be computed with $C^\maltese$ and its empirical counterpart $\widehat C_n^\maltese$. This is the approach  we propose here. \BB{However, since
these statistics are not margin-free anymore, we will no longer call them dependence measures. See, e.g., \cite[Section 2]{Geenens/LafayedeMicheaux:2022} for an interesting discussion on  dependence.} Note that the asymptotic behaviour of the associated versions of Kendall's tau and Spearman's rho has been studied in \cite{Genest/Neslehova/Remillard:2014}, and tests of independence based on $\widehat C_n^\maltese$ were proposed in \cite{Genest/Neslehova/Remillard/Murphy:2019}, while
in the serial case, tests of randomness based on the serial version $\widehat C_n^{\maltese,s}$ have  been studied in \cite{Nasri:2022}, as well as the asymptotic behaviour of the serial versions of Kendall's tau and Spearman's rho. \BB{The results in \citet{Nasri:2022} extend those of \cite{Kojadinovic/Yan:2011b} obtained for continuous observations.}

The main aim of this article is to define tests of independence and tests of randomness using bivariate and multivariate extensions of the copula-based  dependence measures when the margins are arbitrary, to find explicit expressions of these statistics, and to study their asymptotic behaviour. We will also look at the asymptotic distribution of the test statistics under a sequence of contiguous alternatives to be able to suggest locally powerful tests for given dependence models, in the same spirit as \cite{Genest/Verret:2005} did in the bivariate case for continuous margins. To this end, we also present a new representation of the multilinear copulas in the serial and non-serial cases that enables us to perform calculations more easily.
Note that in both \cite{Genest/Neslehova/Remillard/Murphy:2019} and \cite{Nasri:2022}, the main focus was on using Cram\'er-von Mises statistics of related multilinear processes, which is not done here.

In Section \ref{sec:processes}, we recall the definitions and properties of multilinear copulas in a  serial setting \citep{Nasri:2022} and non-serial setting \cite{Genest/Neslehova/Remillard/Murphy:2019}, together with their associated M\"obius transforms. Next, in Section \ref{sec:depmeasures}, we define the serial and non-serial versions of the proposed statistics extending the dependence measures, providing explicit formulas that are easy to implement, and we study their asymptotic behaviour under the null hypothesis of independence or randomness. Multivariate extensions similar to those defined in \cite{Genest/Remillard:2004} and \cite{Genest/Neslehova/Remillard:2014}
will also be studied. In addition, under additional moment conditions, one finds the asymptotic limits of these statistics when the null hypothesis is not satisfied. Next, in Section \ref{sec:contiguity}, we  study the asymptotic behaviour of the proposed test statistics under a sequence of contiguous alternatives, using the results of \cite{Genest/Neslehova/Remillard/Murphy:2019} and \cite{Nasri:2022}. This  enables us to find the locally most powerful tests amongst the class of the proposed tests statistics. We  also discuss how to combine the proposed tests statistics. Finally, numerical experiments are performed in Section \ref{sec:num} to assess the power of the tests for finite samples.

\section{Multilinear copulas and associated empirical processes}\label{sec:processes}

From now on, we consider the following two settings: the non-serial case and the serial case. In the non-serial case, we have independent and identically distributed (iid) random vectors $\bU_1,\ldots, \bU_n\sim C$, for a given copula $C$, and the observations are $\bX_i=\bF^{-1}(\bU_i)$, $i\in \setn$. In the serial setting,
 we have a stationary and ergodic sequence  of random variables $(U_t)_{t\ge 1}$, and the observed time series is $Y_t = F^{-1}(U_t)$, $t\in \setn$.
  We can now define the multilinear copula. For any $j\in\setd$, set $J_{F_j}(x_j,u_j) = E\left[\psi_{F_j}(X_j,u_j)|X_j=x_j\right] = P\left\{ F_j(x_j-)+ V_j \Delta_{F_j}(x_j)\le u_j \right\}$. Then,
$\disp
J_{F_j}(x_j,u_j) = \left\{ \begin{array}{cc}
\I\{F_j(x_j)\le u_j\}, & \text{ if }\Delta_{F_j}(x_j)=0,\\
D\left\{\frac{u_j-F_j(x_j-)}{\Delta_{F_j}(x_j)}\right\}, & \text{ if }\Delta_{F_j}(x_j)>0.
\end{array}\right.$, where $D$ is the cdf of $U\sim \unif$. Note that when $\Delta_{F_j}(x_j)>0$, $J_{F_j}(x_j,u_j)=0$ if $u_j \le F_j(x_j-)$, $J_{F_j}(x_j,u_j)=1$ if $u_j \ge F_j(x_j)$, and $J_{F_j}(x_j,u_j)=\dfrac{u_j-F_j(x_j-)}{\Delta_{F_j}(x_j)}$ if $F_j(x_j-)\le u_j \le F_j(x_j)$.
Using properties of conditional expectations, one obtains
\begin{equation}
    \label{eq:Cmaltese}
    C^\maltese(\bu) = E\left\{\prod_{j=1}^d J_{F_j}(X_j,u_j)\right\}, \quad \bu \in [0,1]^d.
\end{equation}
As a result,
\begin{equation}
    \label{eq:Cnmaltese}
    \widehat C_n^\maltese(\bu) = n^{-1}\sum_{i=1}^n \prod_{j=1}^d J_{F_{nj}}(X_{ij},u_j) =
    n^{-1}\sum_{i=1}^n \prod_{j=1}^d D\left\{\frac{u_j-F_{nj}(X_{ij}-)}{\Delta_{F_{nj}}(X_{ij})}\right\}
    , \quad \bu \in [0,1]^d.
\end{equation}
This new expression is different  from what appears in the literature, e.g., \cite{Genest/Neslehova/Remillard:2017,Genest/Neslehova/Remillard/Murphy:2019}, but it is easier to manipulate for our purposes. In fact, $\hat C_n^\maltese$ was previously defined by
$$
\hat C_n^\maltese(\bu)  = n^{-1} \sum_{i=1}^n \prod_{j=1}^d  \Bigl[  \lambda_{F_{nj}}(u_j) \I \{X_{ij}\le
F_{nj}^{-1}(u_j) \} \\ + \{1-\lambda_{F_{nj}}(u_j)\}\I \{X_{ij} < F_{nj}^{-1}(u_j) \} \Bigr],
$$
where, for any cdf $G$ and $u\in (0,1)$,
$\lambda_G(u) = \left\{
\begin{array}{cc}
\dfrac{ u - G \left\{G^{-1} (u)-\right\}} {\Delta_G\left\{G^{-1}(u)\right\}}, & \quad \Delta_{G} \left\{G^{-1}(u)\right\} > 0, \\
1, & \quad \text{otherwise}. \end{array}\right.
$.
Next, the empirical serial multilinear copula, first defined and studied in \cite{Nasri:2022}, can also be written as
\begin{equation}
    \label{eq:copempserial}
\widehat C_n^{\maltese,s}(\bu) = n^{-1}\sum_{t=1}^n  \prod_{j=1}^d D\left\{\frac{u_j-F_{n}(Y_{t+1-j}-)}{\Delta_{F_{n}}(Y_{t+1-j})}\right\},  \qquad \bu \in [0,1]^d,
\end{equation}
where $\disp F_n(y) = n^{-1}\sum_{t=1}^n \I\{Y_{t+1-j}\le y\}$, $y\in \dR$, for any $j\in \setd$, using the circular construction.
Further define the empirical multilinear processes $\widehat\dC_n^\maltese = \sn\left(\widehat C_n^\maltese-\Pi\right)$ and
$\widehat\dC_n^{\maltese,s} = \sn\left(\widehat C_n^{\maltese,s}-\Pi\right)$.
 Next,
 let $\cN_d$ be the set of all subsets $A$ of $\setd$ with $\card(A) = |A|>1$, and let
$\cS_d$ be the set of all elements $A$ of $\cN_d$ with $A\ni 1$.  It has been shown, e.g., \cite{Genest/Remillard:2004, Ghoudi/Remillard:2018, Genest/Neslehova/Remillard/Murphy:2019, Nasri:2022}, that M\"obius transforms of empirical processes have nice asymptotic properties for tests of independence or tests of randomness. To this end, define
\begin{equation}
    \label{eq:GnAnonserial}
    \dG_{A,n}^{\maltese}(\bu) =  \cM_A\left(\widehat \dC_n^{\maltese}\right)(\bu)=
    n^{-1/2}\sum_{i=1}^n  \prod_{j\in A} \left[
  D\left\{\frac{u_j-F_{nj}(X_{ij}-)}{\Delta_{F_{nj}}(X_{ij})}\right\}-u_j\right], \quad A\in\cN_d,
\end{equation}
\begin{equation}
    \label{eq:GnAserial}
    \dG_{A,n}^{\maltese,s}(\bu) =  \cM_A\left(\widehat \dC_n^{\maltese,s}\right)(\bu)= n^{-1/2}\sum_{t=1}^n  \prod_{j\in A} \left[
  D\left\{\frac{u_j-F_{n}(Y_{t+1-j}-)}{\Delta_{F_{n}}(Y_{t+1-j})}\right\}-u_j\right], \quad A\in\cS_d,
\end{equation}
where the M\"obius transform $\cM_A$ is defined in Appendix \ref{app:aux}.
Next,  for any $s,t\in [0,1]$,  and any cdf $G$, set
  \begin{equation}\label{eq:gammaG}
\Gamma_{G}(s,t) = s\wedge t -st - \sum_{x: \Delta_G(x)>0}\I\{G(x-)\le  s\wedge t \le s\vee t \le G(x)\}
 \frac{ \left\{(s\wedge t)-G(x-)\right\}\left\{G(x)- s\vee s)\right\}}{\Delta_{G}(x)}.
\end{equation}
The main findings of \cite{Genest/Neslehova/Remillard/Murphy:2019} and \cite{Nasri:2022} that we need can be summarised as follows:
\begin{thm}\label{thm:main}
Under the null hypothesis of independence, $\left\{\dG_{A,n}^\maltese: A\in \cN_d\right\}$ converge jointly in $\ell^\infty\left((0,1)^d\right)$ to independent centred Gaussian processes $\left\{\dG_{A}^\maltese: A\in \cN_d\right\}$, where
$\disp E\left\{\dG_A^\maltese(\bu) \dG_A^\maltese(\bv) \right\} =
\prod_{j\in A} \Gamma_{F_j}(u_j,v_j)$. %, $ A\in \cN_d$.
Under the null hypothesis of randomness,
 $\left\{\dG_{A,n}^{\maltese,s}: A\in \cS_d\right\}$ converge jointly in $\ell^\infty\left((0,1)^d\right)$ to
 independent centred Gaussian processes  $\left\{\dG_{A}^{\maltese,s}: A\in \cS_d\right\}$,
where
$\disp E\left\{\dG_A^{\maltese,s}(\bu) \dG_A^{\maltese,s}(\bv) \right\} =
\prod_{j\in A} \Gamma_F(u_j,v_j)$.%, $A\in \cS_d$.
\end{thm}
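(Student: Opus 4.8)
The plan is to reduce both claims to a functional central limit theorem for a normalized sum of mean-zero summands and then to read off the limiting covariance and the independence from the product structure created by the M\"obius transform. I would treat the non-serial case first, and begin by replacing the empirical margins $F_{nj}$ by the true margins $F_j$, i.e. by working with the ``oracle'' centred summands $\xi_{ij}(u_j) = J_{F_j}(X_{ij},u_j) - u_j$. Since $E\{J_{F_j}(X_{ij},u_j)\} = P(U_{ij}\le u_j) = u_j$, each $\xi_{ij}$ has mean zero, and the oracle version of $\dG_{A,n}^{\maltese}$ is exactly $n^{-1/2}\sum_{i=1}^n \prod_{j\in A}\xi_{ij}(u_j)$, a sum of i.i.d. mean-zero random functions. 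Finite-dimensional convergence then follows from the multivariate central limit theorem, and tightness in $\ell^\infty\left((0,1)^d\right)$ from the fact that each map $\bu\mapsto\prod_{j\in A}\xi_{ij}(u_j)$ is built from products of monotone, uniformly bounded functions of a single coordinate, hence lies in a Donsker class.

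The decisive bridge from this oracle process to $\dG_{A,n}^{\maltese}$ is controlling the effect of estimating the margins, and this is where I expect the main difficulty. Telescoping the difference of the two products over $j\in A$ and linearizing $D\left\{\left(u_j - F_{nj}(X_{ij}-)\right)/\Delta_{F_{nj}}(X_{ij})\right\}$ around $F_j$, the leading error is a sum over $j_0\in A$ of terms in which coordinate $j_0$ carries a marginal-empirical fluctuation of order $n^{-1/2}$ while every remaining coordinate $j\in A\setminus\{j_0\}$ still carries the centred oracle factor $\xi_{ij}(u_j)$. Because $|A|\ge 2$ for every $A\in\cN_d$, at least one such centred factor is always present; under $H_0$ the coordinates of $\bX_i$ are independent and $E\{\xi_{1j}\}=0$, so each error term reduces to a degenerate $U$-statistic of order $n^{-1/2}$ and is therefore $o_P(1)$, uniformly in $\bu$. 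Making this remainder bound uniform in $\bu$ and keeping track of the atoms of the $F_j$ is the technical heart of the argument, and is exactly the step that produces the atom-correction sum appearing in $\Gamma_{F_j}$.

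With convergence established, the covariance and the independence follow by direct computation under $H_0$. For fixed $A$, independence of the coordinates of $\bX_i$ gives $E\left\{\dG_A^{\maltese}(\bu)\dG_A^{\maltese}(\bv)\right\} = \prod_{j\in A} E\left\{\xi_{1j}(u_j)\xi_{1j}(v_j)\right\}$, and evaluating $E\{J_{F_j}(X_{1j},u_j)J_{F_j}(X_{1j},v_j)\} - u_jv_j$ by conditioning on $X_{1j}$ and splitting according to whether $u_j\wedge v_j$ and $u_j\vee v_j$ fall inside a common atom of $F_j$ yields precisely $\Gamma_{F_j}(u_j,v_j)$. For $A\ne B$ the surviving ($i=i'$) term of the cross-covariance factorizes over coordinates, and any index $j\in A\triangle B$ contributes the single centred factor $E\{\xi_{1j}\}=0$; hence the cross-covariance vanishes, and since the whole family is jointly Gaussian, the $\dG_A^{\maltese}$ are mutually independent.

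For the serial case under $H_0$ the sequence $(Y_t)$ is i.i.d. with margin $F$, so the same centred factors $J_F(Y_{t+1-j},u_j)-u_j$ appear, but consecutive summands of $\dG_{A,n}^{\maltese,s}$ now overlap and form a $(d-1)$-dependent, circularly stationary array. I would invoke a central limit theorem for $m$-dependent triangular arrays together with the same Donsker-type tightness, checking that the $O(d)$ wrapped-around summands contribute only $O(d/\sqrt n)=o(1)$. The restriction $A\in\cS_d$, i.e. $1\in A$, anchors each product at the current time, so that the lag-index set $\{t+1-j:j\in A\}$ has maximum $t$; two such sets, for $(A,t)$ and $(B,t')$, can coincide only when $A=B$ and $t=t'$. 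This single fact delivers everything at once: the variance collapses to the $t=t'$ diagonal and factorizes into $\prod_{j\in A}\Gamma_F(u_j,v_j)$, while distinct $A,B\in\cS_d$ never produce matching lag sets and hence yield uncorrelated, jointly Gaussian, and therefore independent limits. The margin-estimation step is handled exactly as in the non-serial case.
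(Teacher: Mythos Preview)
The paper does not prove Theorem~\ref{thm:main} at all: it is explicitly introduced as a summary of results from \cite{Genest/Neslehova/Remillard/Murphy:2019} (non-serial case) and \cite{Nasri:2022} (serial case), and Remark~\ref{rem:bivcop} attributes the covariance formula to (D.6) and \eqref{eq:gammaG} in \cite{Nasri:2022}. There is therefore nothing in the present paper to compare your argument against beyond those citations.

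That said, your outline is a faithful sketch of the standard route used in those references. The two load-bearing observations you isolate are exactly the right ones: (i) under $H_0$, every error term coming from replacing $F_{nj}$ by $F_j$ retains at least one centred factor $\xi_{ij}$ because $|A|\ge 2$, so the margin-estimation correction that normally appears in empirical copula limits vanishes for the M\"obius-transformed processes; and (ii) in the serial case the anchoring condition $1\in A$ forces $\max\{t+1-j:j\in A\}=t$, which pins down $t=t'$ and then $A=B$ for any nonzero contribution, yielding both the product covariance and the cross-independence in one stroke. Your handling of the circular boundary (the $O(d)$ wrapped terms contributing $O(d/\sqrt n)$) is also the standard device.

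The one place where your sketch is thinner than the cited proofs is the uniform control of $J_{F_{nj}}(X_{ij},u)-J_{F_j}(X_{ij},u)$ in the presence of atoms: the dependence on $F_{nj}$ enters through both $F_{nj}(X_{ij}-)$ and $\Delta_{F_{nj}}(X_{ij})$ in the denominator, and when $F_j$ is continuous at $X_{ij}$ the oracle $J_{F_j}$ is an indicator while $J_{F_{nj}}$ is always piecewise linear, so a naive Taylor linearisation does not apply. The cited papers handle this via direct uniform bounds of the type in Lemma~A.4 of \cite{Genest/Neslehova/Remillard:2017} rather than a linearisation; you correctly flag this as the technical heart, but you should be aware that the ``linearising around $F_j$'' language does not quite match what is actually done.
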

\begin{remark}\label{rem:bivcop}
The formulas for the covariances in Theorem \ref{thm:main} follows from (D.6) and \eqref{eq:gammaG} in \cite{Nasri:2022}. One can check that for any $s,t\in [0,1]$, $\Gamma_G(s,t)\ge 0$ with equality if and only if $s\wedge t =0$ or $s\vee t=1$. It is interesting to note that for sets $A$ of size $2$, $\dG_{A,n}^\maltese$ and $\dG_{A,n}^{\maltese,s}$ are empirical multilinear copula processes. In fact,  for $A=\{j,k\}\in \cN_d$, $j<k$,
$\dG_{A,n}^{\maltese}(u_1,u_2) = \sn \left\{ \widehat C_{A,n}^{\maltese}(u_1,u_2)-u_1u_2\right\}$,
where $\widehat C_{A,n}^{\maltese}$ is the  empirical multilinear copula for the pairs $(X_{ij},X_{ik})$, $i\in\setn$.
Similarly, for any $A = \{1,1+\ell\}\in \cS_d$,
$\dG_{A,n}^{\maltese,s}(u_1,u_{2}) = \sn \left\{\widehat C_{A,n}^{\maltese,s}(u_1,u_{2})  -u_1u_2\right\}$,
where $C_{A,n}^{\maltese,s}$ is the empirical multilinear copula for the pairs $(Y_t,Y_{t-\ell})$, $t\in\setn$.
\end{remark}

%%%%%%%%%%%%%%%%%%%%%%%%%%%%%%%%%%%%%%%%%%%%%%%%%%%%%%%%%%%%%%%%%%%%%%%%%%%%%%%%
\section{Statistics for testing independence or randomness for arbitrary distributions}\label{sec:depmeasures}
%%%%%%%%%%%%%%%%%%%%%%%%%%%%%%%%%%%%%%%%%%%%%%%%%%%%%%%%%%%%%%%%%%%%%%%%%%%%%%%%

From now on, let $\bK = (K_1,\ldots, K_d)$ be a vector of margins with mean $\mu_j$ and \BB{finite} variance $\sigma_j^2$,  $j\in\setd$, and define
 $\disp \cL_{K_j}(u) = \int_0^u K_j^{-1}(v)dv$.
Next, for any $j\in \setd$, and any cdf $G$, define
$\disp
\cK_{j,G}(x) = \int_0^1 K_j^{-1}\left\{G(x-)+s \Delta_G(x)\right\}ds$.
Then $\cK_{j,G}(x)  = K_j^{-1}\{ G(x)\} $, if $G$ is continuous at $x$,  and
    $\cK_{j,G}(x)  = \dfrac{\cL_{K_j}\{G(x)\} -\cL_{K_j}\{G(x-)\} }{\Delta_G(x)}$,
if $G$ is not continuous at $x$. The extension of the covariance measures is defined in the following way:\\
In the non-serial case, for any $A\in\cN_d$, set
$\disp
\gamma_{\bK,A}\left(\widehat C_n^\maltese\right)  = \sni (-1)^{|A|} \int_{\dR^A}  \dG_{A,n}^\maltese\left\{\bK(\bx)\right\}d\bx$,
while in the serial case, set
$\disp
\gamma_{\bK,A}\left(\widehat C_n^{\maltese,s}\right)=  \sni (-1)^{|A|} \int_{\dR^A}  \dG_{A,n}^{\maltese,s}\left\{\bK(\bx)\right\}d\bx$, $A\in\cS_d$.
It then follows from Proposition \ref{prop:DG} in Appendix \ref{app:aux} that for any $A\in \cN_d$, in the non-serial case,
\begin{equation}\label{eq:TnAnonserial}
    \gamma_{\bK,A}\left(\widehat C_n^\maltese\right)  = n^{-1}\sum_{i=1}^n \prod_{j\in A}\left\{\cK_{j,F_{nj}}(X_{ij})-\mu_j\right\},
\end{equation}
while in the serial case, for any $A\in \cS_d$,
\begin{equation}\label{eq:TnAserial}
    \gamma_{\bK,A}\left(\widehat C_n^{\maltese,s}\right)=  n^{-1}\sum_{t=1}^n \prod_{j\in A}\left\{\cK_{j,F_{n}}(Y_{t+1-j}) -\mu_j\right\}.
\end{equation}
\begin{example}
For Spearman's rho, $K_j \equiv  D$, so $\cL_j(u) = \frac{u^2}{2}$.
For van der Waerden's coefficient, $K_j \equiv  \Phi$, so $\cL_j = -\phi\circ \Phi^{-1}$, $\mu_j=0$.
For Savage's coefficient, $K_j(x) \equiv  1-e^{-x}$, $x\ge 0$, so $\cL_j(u) = u-u\log{u}$, $\mu_j=1$, with the convention that $0\log{0}=0$.
Finally,  for the modified Blest's coefficient in the bivariate case, $K_1^{-1}(u)=u^2$, $u\in [0,1]$, $K_2=D$.
As a result, one gets the following formula for $12\gamma_{K_1,K_2}\left(\widehat C_n^\maltese\right)$:
$$
 2n^{-1}\sum_{i=1}^n \left\{ F_{nj}^2(X_{ij}-)  + F_{nj}(X_{ij}-)F_{nj}(X_{ij})+F_{nj}^2(X_{ij})-1\right\}
\left\{
F_{n2}(X_{i2}-)+F_{n2}(X_{i2}) -1\right\}.
$$
\end{example}
\begin{remark}
For continuous margins,
    $\Delta_{F_{nj}}(X_{ij})=n^{-1}$ a.s., so
    $$
   \cK_{j,F_{nj}}(X_{ij}) = \dfrac{\cL_j\{F_{nj}(X_{ij})\} -\cL_j\{F_{nj}(X_{ij}-)\} }{\Delta_{F_{nj}}(X_{ij})}\approx K_j^{-1}\left\{\frac{n}{n+1}F_{nj}(X_{ij})\right\}.
    $$
    Note that in general,  $n^{-1}\sum_{i=1}^n K_j^{-1}\left\{\frac{n}{n+1}F_{nj}(X_{ij})\right\}\neq \mu_j$, while
    $\disp
    n^{-1}\sum_{i=1}^n \cK_{j,F_{nj}}(X_{ij}) = \mu_j$, $j\in\setd$.
    This shows that even for continuous margins, one should use formulas \eqref{eq:TnAnonserial}--\eqref{eq:TnAserial} based on the multilinear copulas, since we do not need to work with the normalised $\frac{n}{n+1}F_{nj}(X_{ij})$.
  \end{remark}

The following result is an immediate consequence of Theorem \ref{thm:main}, the continuous mapping theorem,  together with representations \eqref{eq:TnAnonserial} and \eqref{eq:TnAserial}. When $K_j^{-1}$ is unbounded, one can use the same technique as in the corresponding proofs in \cite{Genest/Remillard:2004}, meaning that one integrates $\dG_{A,n}\{\bK(\bx)\}$ on large compact sets and show that the remainder can be made arbitrarily small, since $K_j^{-1}$ is square integrable by hypothesis, $K_j$ having finite variance. The covariance formulas follows from (D.6)-(D.7) in \cite{Nasri:2022}.

\begin{corollary}\label{cor:dep}
    Under the null hypothesis of independence, $\left\{\sn\gamma_{\bK,A}\left(\widehat C_n^\maltese\right): A\in \cN_d\right\}$ converge jointly to independent Gaussian random variables with variance $\disp \varsigma_{\bK,\bF,A}^2 = \prod_{j\in A}\varsigma_{K_j,F_j}^2$,
    where for any cdf $G$,
   \begin{equation}\label{eq:varsigma}
    \varsigma_{K_j,G}^2 = \int \left\{ \cK_{j,G}\{G(x)\} -\mu\right\}^2 dG(x)=\int_{\dR^2} \Gamma_{G}\{K_j(x),K_j(y)\} dxdy, \qquad j\in \setd.
   \end{equation}
Furthermore, under the null hypothesis of randomness, $\left\{\sn\gamma_{\bK,A}\left(\widehat C_n^{\maltese,s}\right): A\in \cS_d\right\}$ converge jointly to independent Gaussian random variables with variance $\disp \varsigma_{\bK,F,A}^2 =  \prod_{j\in A}\varsigma_{K_j,F}^2$.
\end{corollary}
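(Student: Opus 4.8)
The plan is to treat $\sn\gamma_{\bK,A}(\widehat C_n^\maltese)$ as the image of the empirical process $\dG_{A,n}^\maltese$ under a fixed continuous linear functional and then invoke the continuous mapping theorem together with Theorem \ref{thm:main}. By the definition of $\gamma_{\bK,A}$ one has $\sn\gamma_{\bK,A}(\widehat C_n^\maltese) = (-1)^{|A|}\int_{\dR^A}\dG_{A,n}^\maltese\{\bK(\bx)\}\,d\bx =: \Phi_A(\dG_{A,n}^\maltese)$, and similarly in the serial case. First I would dispose of the case where every $K_j^{-1}$ is bounded, i.e.\ each $K_j$ has compact support $[a_j,b_j]$. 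Each factor of $\dG_{A,n}^\maltese\{\bK(\bx)\}$ vanishes as soon as some coordinate satisfies $K_j(x_j)\in\{0,1\}$, because the summand $D\{\cdots\}-u_j$ equals $0$ at $u_j=0$ and at $u_j=1$; hence the integral reduces to one over the compact box $\prod_{j\in A}(a_j,b_j)$, on which $\Phi_A$ is bounded and linear for the uniform norm, and therefore continuous on $\ell^\infty\left((0,1)^{|A|}\right)$.

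Applying the continuous mapping theorem to the joint weak convergence of Theorem \ref{thm:main}, the vector $\{\Phi_A(\dG_{A,n}^\maltese):A\in\cN_d\}$ converges jointly to $\{\Phi_A(\dG_A^\maltese):A\in\cN_d\}$. Each limit is a continuous linear image of a centred Gaussian process, hence a centred Gaussian variable; since $\Phi_A$ depends only on $\dG_A^\maltese$ and the limit processes are independent across $A$, the limiting variables are independent. Their variance is computed by Fubini from the covariance kernel in Theorem \ref{thm:main}:
\[
\var \Phi_A(\dG_A^\maltese) = \int_{\dR^A}\!\!\int_{\dR^A} \prod_{j\in A}\Gamma_{F_j}\{K_j(x_j),K_j(y_j)\}\,d\bx\,d\by = \prod_{j\in A}\int_{\dR^2}\Gamma_{F_j}\{K_j(x),K_j(y)\}\,dx\,dy = \prod_{j\in A}\varsigma_{K_j,F_j}^2,
\]
which is the second expression in \eqref{eq:varsigma}; its equality with the first form follows from the single-coordinate identity $\var\{\cK_{j,F_j}(X_{1j})\}=\int_{\dR^2}\Gamma_{F_j}\{K_j(x),K_j(y)\}\,dx\,dy$ (recall $E\{\cK_{j,F_j}(X_{1j})\}=\mu_j$), i.e.\ identities (D.6)--(D.7) of \cite{Nasri:2022} applied coordinatewise, which also justifies the representation \eqref{eq:TnAnonserial} I use to obtain exact finite-sample moments below.

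The main obstacle is the general case where $K_j^{-1}$ is merely square integrable (finite variance) and possibly unbounded, so that the integration domain is all of $\dR^A$ and $\Phi_A$ is no longer obviously continuous on $\ell^\infty$. Here I would truncate: for $M>0$ replace $\Phi_A$ by the functional $\Phi_A^{(M)}$ integrating only over $[-M,M]^A$, apply the bounded case to each $\Phi_A^{(M)}$, and then let $M\to\infty$. The crux is a uniform-in-$n$ tail estimate: using the explicit finite-sample second moment of $\int_{(\dR^A)\setminus[-M,M]^A}\dG_{A,n}^\maltese\{\bK(\bx)\}\,d\bx$ (available from \eqref{eq:TnAnonserial}), which is dominated by a product over $j\in A$ of integrals of $\Gamma_{F_j}\{K_j(x),K_j(y)\}$ over the tails of $K_j$, one shows that this second moment tends to $0$ as $M\to\infty$ uniformly in $n$, precisely because $\int_{\dR^2}\Gamma_{F_j}\{K_j(x),K_j(y)\}\,dx\,dy=\varsigma_{K_j,F_j}^2<\infty$ by the finite-variance hypothesis. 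Combined with the convergence of the truncated Gaussian limits to the full Gaussian, a standard approximation argument as in the corresponding proofs of \cite{Genest/Remillard:2004} closes the gap. The serial statements are proved identically, replacing $\cN_d$ by $\cS_d$, each $F_j$ by the common margin $F$, and using the serial half of Theorem \ref{thm:main}; the product structure of both the functionals and the limiting covariances again delivers joint convergence to independent centred Gaussians with variances $\prod_{j\in A}\varsigma_{K_j,F}^2$.
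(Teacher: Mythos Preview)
Your proposal is correct and follows essentially the same approach as the paper: apply Theorem~\ref{thm:main} together with the continuous mapping theorem to the linear functional $\Phi_A$, handle the unbounded-$K_j^{-1}$ case by truncating the integration domain to compact boxes and controlling the remainder via square integrability as in \cite{Genest/Remillard:2004}, and read off the variance formula from (D.6)--(D.7) of \cite{Nasri:2022}. You have simply fleshed out in more detail what the paper states in one paragraph.
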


\begin{remark}\label{rem:varsigma}
 It follows from  \cite{Genest/Remillard:2004} that  $\sigma_j^2 = \int_{\dR^2} \{K_j(x\wedge y)-K_j(x)K_j(y)\}dxdy$.   Finally,  $\varsigma_{K_j,F_j}^2 = \var\left\{\cK_{j,F_j}(X_j)\right\}$, if $X_j\sim F_j$, $j\in \setd$.
\end{remark}

The next result is fundamental  for applications since it shows how to normalised the statistics to standard Gaussian distributions in the limit.
Its proof is given in Appendix \ref{app:pf-lem}.

\begin{lem}\label{lem:varest} In the non-serial case,
$$
s_{K_j,F_{nj}}^2 = n^{-1}\sum_{i=1}^n \left[\dfrac{\cL_j\{F_{nj}(X_{ij})\} -\cL_j\{F_{nj}(X_{ij}-)\} }{\Delta_{F_{nj}}(X_{ij})}-\mu_j\right]^2 \stackrel{Pr}{\longrightarrow} \varsigma_{K_j,F_j}^2, \qquad j\in \setd,
$$
and in the serial case,
$$
s_{K_j,F_n}^2 = n^{-1}\sum_{t=1}^n \left[\dfrac{\cL_j\{F_{n}(Y_{t})\} -\cL_j\{F_{n}(Y_{t}-)\} }{\Delta_{F_{n}}(Y_{t})}-\mu_j\right]^2  \stackrel{Pr}{\longrightarrow} \varsigma_{K_j,F}^2 , \qquad j\in \setd.
$$
\end{lem}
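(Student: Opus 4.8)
The plan is to prove the non-serial statement in full; the serial case is identical once the strong law of large numbers is replaced by the ergodic theorem, since every interval-wise bound used below is deterministic and the Glivenko--Cantelli theorem holds equally for the stationary ergodic sequence $(U_t)$ generating the series.

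First I would record two exact bookkeeping identities obtained by grouping the summands according to ties. Since $F_{nj}$ is an empirical cdf, its jumps partition $(0,1]$ into intervals $I_\ell = \bigl(F_{nj}(x_\ell-),F_{nj}(x_\ell)\bigr]$, and every observation tied at $x_\ell$ receives the common score $\cK_{j,F_{nj}}(x_\ell)=|I_\ell|^{-1}\int_{I_\ell}K_j^{-1}(v)\,dv$, because the numerator $\cL_j\{F_{nj}(x_\ell)\}-\cL_j\{F_{nj}(x_\ell-)\}$ equals $\int_{I_\ell}K_j^{-1}$ and $\Delta_{F_{nj}}(x_\ell)=|I_\ell|$. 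Writing $m_\ell=n|I_\ell|$ for the tie-group size, the identity $n^{-1}\sum_i\cK_{j,F_{nj}}(X_{ij})=\sum_\ell\int_{I_\ell}K_j^{-1}=\int_0^1 K_j^{-1}=\mu_j$ gives $s_{K_j,F_{nj}}^2=n^{-1}\sum_i\cK_{j,F_{nj}}(X_{ij})^2-\mu_j^2$, while Jensen's inequality on each $I_\ell$ yields the deterministic bound $n^{-1}\sum_i\cK_{j,F_{nj}}(X_{ij})^2\le\sum_\ell\int_{I_\ell}(K_j^{-1})^2=\int_0^1(K_j^{-1})^2=\sigma_j^2+\mu_j^2$. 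The same two identities hold for any cdf, so in particular $E\{\cK_{j,F_j}(X_j)^2\}\le\int_0^1(K_j^{-1})^2<\infty$, which is exactly where the finite-variance hypothesis on $K_j$ enters, and $\varsigma_{K_j,F_j}^2=\var\{\cK_{j,F_j}(X_j)\}$ of Remark \ref{rem:varsigma} is finite.

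Next I would introduce the oracle statistic $\tilde s_n^2=n^{-1}\sum_i\{\cK_{j,F_j}(X_{ij})-\mu_j\}^2$, which converges a.s.\ to $\varsigma_{K_j,F_j}^2$ by the strong law of large numbers. It then suffices to show $s_{K_j,F_{nj}}^2-\tilde s_n^2\pr0$. Expanding the difference of squares and applying Cauchy--Schwarz bounds it by $D_n^{1/2}B_n^{1/2}$, where $D_n=n^{-1}\sum_i\{\cK_{j,F_{nj}}(X_{ij})-\cK_{j,F_j}(X_{ij})\}^2$ and $B_n=n^{-1}\sum_i\{\cK_{j,F_{nj}}(X_{ij})+\cK_{j,F_j}(X_{ij})-2\mu_j\}^2$. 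The factor $B_n$ is $O_P(1)$: the second identity above bounds $n^{-1}\sum_i\cK_{j,F_{nj}}(X_{ij})^2$ deterministically by $\sigma_j^2+\mu_j^2$, while $n^{-1}\sum_i\cK_{j,F_j}(X_{ij})^2$ converges by the strong law. Everything therefore reduces to proving $D_n\pr0$, which is the main obstacle.

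To control $D_n$ I would truncate. For $M>0$ set $K^{-1}_{j,M}=(K_j^{-1}\vee(-M))\wedge M$ and let $\cK^M_{j,G}$ be the associated averaged score. The decisive point is that the interval-wise Jensen bound is uniform in the cdf: for any empirical or true $G$, $n^{-1}\sum_i\{\cK_{j,G}(X_{ij})-\cK^M_{j,G}(X_{ij})\}^2\le\int_0^1(K_j^{-1}-K^{-1}_{j,M})^2$, and this deterministic quantity tends to $0$ as $M\to\infty$ by dominated convergence, since $K_j^{-1}\in L^2(0,1)$. Splitting $D_n$ through the truncated scores and using this bound for both $F_{nj}$ and $F_j$ (the latter combined with the strong law), the problem reduces to showing, for each fixed $M$, that $D_n^M=n^{-1}\sum_i\{\cK^M_{j,F_{nj}}(X_{ij})-\cK^M_{j,F_j}(X_{ij})\}^2\pr0$ with bounded scores. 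Here I would invoke Glivenko--Cantelli: $\sup_x|F_{nj}(x)-F_j(x)|\to0$ a.s.\ forces each empirical interval to align with the interval determined by $F_j$, shrinking to a point at continuity points and to the jump interval $[F_j(x-),F_j(x)]$ at the at most countably many atoms, so the bounded averaged scores converge and $D_n^M\to0$ a.s.\ by bounded convergence. The delicate point, where I expect to spend the most effort, is making this final step legitimate near the atoms of $F_j$ and in the tails: the unbounded part of $K_j^{-1}$ must not reappear there, and it is precisely the uniform truncation bound above that quarantines the tails so that only the bounded, Glivenko--Cantelli-driven behaviour survives.
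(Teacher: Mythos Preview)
Your argument is correct and takes a genuinely different route from the paper's. The paper works throughout with the integral representation $\cK_{j,F_{nj}}(X_{ij})-\mu_j = -\int_\dR\bigl[J_{F_{nj}}\{X_{ij},K_j(x)\}-K_j(x)\bigr]\,dx$ from Proposition~\ref{prop:DG}, truncates that $x$-integral to $[-M,M]$, shows the truncated piece converges directly to the truncated target $s_{1,M}^2$, and then handles the tail integrals by a case analysis according to whether $F_{nj}(X_{ij}-)>1-\delta$ or $F_{nj}(X_{ij})<\delta$; no oracle statistic appears. Your route instead exploits the interval-average formula directly, pivots on the oracle $\tilde s_n^2$, and reduces everything to $D_n\to0$, which you then control by truncating $K_j^{-1}$ and invoking the cdf-free Jensen bound. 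This is cleaner on the tail side: your deterministic inequality $n^{-1}\sum_i\{\cK_{j,F_{nj}}-\cK^M_{j,F_{nj}}\}^2\le\int_0^1(K_j^{-1}-K^{-1}_{j,M})^2$ replaces the paper's three-way splitting of $s_{n,2,M}^2$ entirely. It is also worth noting that your key reduction $D_n\to0$ is precisely the $p=2$ instance of the paper's Lemma~\ref{lem:moment}, which the authors prove separately (for Corollary~\ref{cor:H1}) by a truncation in $x$-space closer in spirit to their proof of the present lemma; your approach therefore unifies the two arguments. One small remark on the final step: ``bounded convergence'' is not quite the right name, since the summation points change with $n$; the clean way to close $D_n^M\to0$ is to approximate the bounded monotone $K^{-1}_{j,M}$ in $L^2(0,1)$ by a continuous function (reusing your Jensen bound once more for the approximation error), after which uniform continuity together with Glivenko--Cantelli gives uniform convergence of the averaged scores.
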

From Corollary \ref{cor:dep} and Lemma \ref{lem:varest}, we obtain the next result, proven by \cite{Nasri:2022} in the serial case.
\begin{corollary}\label{cor:corre}
Under the null hypothesis of independence, $\sn r_{A,n} =  \sn \dfrac{\gamma_{\bK,A}\left(\widehat C_n^\maltese\right)}{\prod_{j\in A} s_{K_j,F_{nj}} }$, $A\in \cN_d$, converge jointly in law to independent standard Gaussian random variables. In addition, under the null hypothesis of randomness, $\sn r_{A,n} =  \sn \dfrac{\gamma_{\bK,A}\left(\widehat C_n^{\maltese,s}\right)}{\prod_{j\in A} s_{K_j,F_{n}} }$, $A\in \cS_d$, converge jointly in law to independent standard Gaussian random variables.
\end{corollary}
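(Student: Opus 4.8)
The plan is to obtain the corollary as a joint Slutsky argument that combines the weak convergence of the numerators supplied by Corollary \ref{cor:dep} with the consistency of the variance estimators supplied by Lemma \ref{lem:varest}. I would treat the non-serial case in detail; the serial case is word-for-word identical upon replacing $\cN_d$ by $\cS_d$, $F_{nj}$ by $F_n$, and invoking the serial halves of the two cited results.

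First I would record the two limiting inputs. By Corollary \ref{cor:dep}, the finite vector $\left(\sn\gamma_{\bK,A}(\widehat C_n^\maltese)\right)_{A\in\cN_d}$ converges jointly in law to a vector $(Z_A)_{A\in\cN_d}$ of independent centred Gaussian variables with $Z_A\sim N\!\left(0,\varsigma_{\bK,\bF,A}^2\right)$ and $\varsigma_{\bK,\bF,A}^2=\prod_{j\in A}\varsigma_{K_j,F_j}^2$. Independently, Lemma \ref{lem:varest} gives $s_{K_j,F_{nj}}^2\pr\varsigma_{K_j,F_j}^2$ for every $j\in\setd$. Since $\setd$ and $\cN_d$ are finite and the limits are deterministic constants, the continuous mapping theorem applied to the componentwise map $(v_1,\ldots,v_d)\mapsto\left(\prod_{j\in A}\sqrt{v_j}\right)_{A\in\cN_d}$ yields $\left(\prod_{j\in A}s_{K_j,F_{nj}}\right)_{A}\pr\left(\varsigma_{\bK,\bF,A}\right)_{A}$, with $\varsigma_{\bK,\bF,A}=\prod_{j\in A}\varsigma_{K_j,F_j}$.

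I would then assemble the ratio. A sequence converging in law paired with one converging in probability to a constant converges jointly in law to the pair, so $\left(\sn\gamma_{\bK,A}(\widehat C_n^\maltese),\ \prod_{j\in A}s_{K_j,F_{nj}}\right)_{A}$ converges in law to $\left(Z_A,\ \varsigma_{\bK,\bF,A}\right)_{A}$. Because the limiting denominators are strictly positive constants, the coordinatewise ratio map is continuous at the limit, so the continuous mapping theorem gives $\sn r_{A,n}\Rightarrow Z_A/\varsigma_{\bK,\bF,A}$ jointly over $A\in\cN_d$; each such limit is standard Gaussian, and independence of the family $(Z_A)_A$ passes to the rescaled family, proving the claim.

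The one point requiring care, and the only real obstacle, is the strict positivity of each limiting variance $\varsigma_{K_j,F_j}^2$: this is needed both to normalise by $\prod_{j\in A}s_{K_j,F_{nj}}$ and to guarantee continuity of the square-root and ratio maps at the limit. Positivity follows from Remark \ref{rem:varsigma}, since $\varsigma_{K_j,F_j}^2=\var\{\cK_{j,F_j}(X_j)\}$ is strictly positive whenever $K_j$ is non-degenerate, equivalently (by Remark \ref{rem:bivcop}) whenever $\Gamma_{F_j}\{K_j(x),K_j(y)\}$ is positive on a set of $(x,y)$ of positive Lebesgue measure. Granting this, the remainder is a routine application of Slutsky's theorem.
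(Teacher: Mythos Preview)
Your proposal is correct and matches the paper's approach exactly: the paper simply states that the result follows from Corollary \ref{cor:dep} and Lemma \ref{lem:varest}, which is precisely the Slutsky argument you spell out. One small refinement: strict positivity of $\varsigma_{K_j,F_j}^2=\var\{\cK_{j,F_j}(X_j)\}$ requires not only that $K_j$ be non-degenerate but also that $F_j$ be non-degenerate (if $F_j$ is a point mass then $\Gamma_{F_j}\equiv 0$ by \eqref{eq:gammaG}), though this is implicit throughout the paper and not a gap in your argument.
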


%\BB{
%\begin{remark}\label{rem:Cmobius}
%It is interesting to note that for sets in the non-serial case, for $A=\{j,k\}\in \cN_d$, $j<k$,
%$\cM_A\left(C^{\maltese}\right) (u_1,u_2) = C_{A}^{\maltese}(u_1,u_2)-u_1 u_2$,
%where $\widehat C_{A}^{\maltese}$ is the  multilinear copula for the pairs $(X_{1j},X_{1k})$.
%Similarly, for any $A = \{1,1+\ell\}\in \cS_d$,
%$\cM_{A}\left(C^{\maltese,s}\right)(u_1,u_{2}) = C_{A}^{\maltese,s}(u_1,u_{2})  -u_1u_2 $,
%where $C_A^{\maltese,s}$ is the multilinear copula for the pairs $(Y_t,Y_{t-\ell})$.
%\end{remark}
%}

Finally, one can ask what happens when the null hypothesis of independence or randomness does not hold. Note that even when the margins were assumed to be continuous, \cite{Genest/Remillard:2004} did not  answer this question. Here, we provide an answer, under additional moments conditions.
\begin{ass}\label{hyp:momentsKj}
There exist $p_1,\ldots, p_d >1 $ such $\disp \sum_{j=1}^d \frac{1}{p_j}=1$ and
$\disp \int_0^1 \left|K_j^{-1}(u)\right|^{p_j} du <\infty$ for $j\in \setd$.
\end{ass}
Before stating the result, for any $A\in\cN_d$, in the non-serial case, set
\begin{equation}\label{eq:TnAnonserial1}
    \gamma_{\bK,A}\left(C_n^\maltese\right)  = n^{-1}\sum_{i=1}^n \prod_{j\in A}\left\{\cK_{j,F_{j}}(X_{ij})-\mu_j\right\}
\end{equation}
while in the serial case, for any $A\in \cS_d$, set
\begin{equation}\label{eq:TnAserial}
    \gamma_{\bK,A}\left(C_n^{\maltese,s}\right)=  n^{-1}\sum_{t=1}^n \prod_{j\in A}\left\{\cK_{j,F}(Y_{t+1-j}) -\mu_j\right\},
\end{equation}

\begin{corollary}\label{cor:H1} In the non-serial case, under Assumption \ref{hyp:momentsKj},  for any $A\in \cN_d$, both $\gamma_{\bK,A}\left(\widehat C_n^\maltese\right)$ and $\gamma_{\bK,A}\left(C_n^\maltese\right)$ converge in probability to
\begin{equation}\label{eq:gammagen}
 \gamma_{\bK,A}\left(C^\maltese\right) = E\left[ \prod_{j\in A} \left\{\cK_{j,F_j}(X_{ij})-\mu_j\right\}\right].
 \end{equation}
In the serial case, under Assumption \ref{hyp:momentsKj}, if the series $(U_t)_{t\ge 1}$ is stationary and ergodic, and $Y_t=F^{-1}(U_t)$, then for any $A\in \cS_d$, both  $\gamma_{\bK,A}\left(\widehat C_n^{\maltese,s}\right)$  and  $\gamma_{\bK,A}\left( C_n^{\maltese,s}\right)$  converge in probability to
\begin{equation}\label{eq:gammagenserial}
    \gamma_{\bK,A}\left(C^{\maltese,s}\right) = E\left[ \prod_{j\in A} \left\{\cK_{j,F}(Y_{t+1-j})-\mu_j\right\}\right], \quad t\ge d.
        \end{equation}
\end{corollary}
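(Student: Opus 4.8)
The plan is to treat separately the ``true-margin'' statistics $\gamma_{\bK,A}\left(C_n^\maltese\right)$, $\gamma_{\bK,A}\left(C_n^{\maltese,s}\right)$ and the ``empirical-margin'' statistics $\gamma_{\bK,A}\left(\widehat C_n^\maltese\right)$, $\gamma_{\bK,A}\left(\widehat C_n^{\maltese,s}\right)$: I would establish the stated limit for the former by a law-of-large-numbers argument, and then show that the latter differ from the former by a term vanishing in probability. Throughout, write $W_{ij}=\cK_{j,F_j}(X_{ij})-\mu_j$ and $\widehat W_{ij}=\cK_{j,F_{nj}}(X_{ij})-\mu_j$, with the obvious serial analogues. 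In the non-serial case the summands $\prod_{j\in A}W_{ij}$ are i.i.d., so by the strong law it suffices to check $E\prod_{j\in A}|W_{ij}|<\infty$. This follows from the generalised H\"older inequality with exponents $(p_j)_{j\in A}$: since $A\in\cN_d$ we have $\sum_{j\in A}1/p_j\le\sum_{j=1}^d1/p_j=1$, so on the probability space $E\prod_{j\in A}|W_{ij}|\le\prod_{j\in A}\bigl(E|W_{ij}|^{p_j}\bigr)^{1/p_j}$. Each factor is finite because, with $V_j$ an independent randomisation and $U_j=F_j(X_j-)+V_j\Delta_{F_j}(X_j)\sim\unif$, one has $\cK_{j,F_j}(X_j)=E\{K_j^{-1}(U_j)\mid X_j\}$, whence by conditional Jensen $E|\cK_{j,F_j}(X_j)|^{p_j}\le E|K_j^{-1}(U_j)|^{p_j}=\int_0^1|K_j^{-1}(u)|^{p_j}\,du<\infty$ under Assumption \ref{hyp:momentsKj}. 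The strong law then gives $\gamma_{\bK,A}\left(C_n^\maltese\right)\to\gamma_{\bK,A}\left(C^\maltese\right)$ almost surely. In the serial case the summand is a fixed measurable function of finitely many consecutive terms of the stationary ergodic sequence, hence itself stationary and ergodic; the same H\"older bound gives integrability and Birkhoff's ergodic theorem yields the analogous conclusion.

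It then remains to prove $\gamma_{\bK,A}\left(\widehat C_n^\maltese\right)-\gamma_{\bK,A}\left(C_n^\maltese\right)\pr 0$ (and its serial analogue). I would expand the difference of products by the telescoping identity
\[
\prod_{j\in A}\widehat W_{ij}-\prod_{j\in A}W_{ij}=\sum_{k\in A}\Bigl(\prod_{j\in A:\,j<k}W_{ij}\Bigr)(\widehat W_{ik}-W_{ik})\Bigl(\prod_{j\in A:\,j>k}\widehat W_{ij}\Bigr),
\]
and bound the empirical average of the modulus of each term by the generalised H\"older inequality for the empirical measure $n^{-1}\sum_i\delta_i$, assigning exponent $p_j$ to the factor carrying index $j$ (legitimate since $\sum_{j\in A}1/p_j\le1$). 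For each $k$ this yields the bound
\[
\Bigl(n^{-1}\sum_i|\widehat W_{ik}-W_{ik}|^{p_k}\Bigr)^{1/p_k}\prod_{j\in A:\,j<k}\Bigl(n^{-1}\sum_i|W_{ij}|^{p_j}\Bigr)^{1/p_j}\prod_{j\in A:\,j>k}\Bigl(n^{-1}\sum_i|\widehat W_{ij}|^{p_j}\Bigr)^{1/p_j}.
\]
Everything then reduces to the single estimate
\[
\delta_{n,k}:=n^{-1}\sum_{i=1}^n\bigl|\cK_{k,F_{nk}}(X_{ik})-\cK_{k,F_k}(X_{ik})\bigr|^{p_k}\pr 0,\qquad k\in A.
\]
Indeed, the unhatted factors $n^{-1}\sum_i|W_{ij}|^{p_j}$ converge a.s.\ to $E|W_{ij}|^{p_j}<\infty$ by the strong law, and, granting $\delta_{n,j}\pr0$ for each $j$, Minkowski's inequality on the empirical $L^{p_j}$-space shows the hatted factors $n^{-1}\sum_i|\widehat W_{ij}|^{p_j}$ are $O_P(1)$; hence each telescoping term is $o_P(1)$.

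I expect $\delta_{n,k}\pr0$ to be the main obstacle, and it is exactly the point at which the unboundedness of $K_k^{-1}$ must be controlled, as flagged in the discussion preceding Corollary \ref{cor:dep}. The idea is to recall that $\cK_{k,F_{nk}}(X_{ik})$ is the average of $K_k^{-1}$ over the interval with endpoints $F_{nk}(X_{ik}-)$ and $F_{nk}(X_{ik})$, while $\cK_{k,F_k}(X_{ik})$ is the same average over the interval with endpoints $F_k(X_{ik}-)$ and $F_k(X_{ik})$; by the Glivenko--Cantelli theorem these endpoints converge uniformly, almost surely. I would then truncate $K_k^{-1}$ at a level $M$: for the bounded, monotone truncation the uniform convergence of the endpoints forces the corresponding averaged difference to vanish, while the contribution of the tail $\{|K_k^{-1}|>M\}$ is controlled uniformly in $n$ by a uniform-integrability argument built on $\int_0^1|K_k^{-1}(u)|^{p_k}\,du<\infty$, and is made arbitrarily small by letting $M\to\infty$. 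This is the same truncation-and-uniform-integrability technique used in \cite{Genest/Remillard:2004}, and it is what upgrades the pointwise convergence of $\cK_{k,F_{nk}}$ to $\cK_{k,F_k}$ to the required $L^{p_k}$-averaged convergence.

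The serial case is entirely parallel: the strong law is replaced by Birkhoff's ergodic theorem, and the ordinary Glivenko--Cantelli theorem is replaced by its stationary-ergodic version for $\sup_y|F_n(y)-F(y)|\to0$, which itself follows from the ergodic theorem applied to the indicator summands. With these two substitutions the telescoping bound, the reduction to $\delta_{n,k}$, and the truncation argument carry over verbatim, giving convergence in probability of $\gamma_{\bK,A}\left(\widehat C_n^{\maltese,s}\right)$ and $\gamma_{\bK,A}\left(C_n^{\maltese,s}\right)$ to $\gamma_{\bK,A}\left(C^{\maltese,s}\right)$.
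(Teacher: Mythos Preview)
Your proposal is correct and follows essentially the same route as the paper: reduce the difference $\gamma_{\bK,A}(\widehat C_n^\maltese)-\gamma_{\bK,A}(C_n^\maltese)$ to the single estimate $\delta_{n,k}=n^{-1}\sum_i|\cK_{k,F_{nk}}(X_{ik})-\cK_{k,F_k}(X_{ik})|^{p_k}\pr 0$ via a product expansion and generalised H\"older on the empirical measure, then prove $\delta_{n,k}\pr0$ by a truncation argument (the paper isolates this as a separate lemma, Lemma~\ref{lem:moment}). The only cosmetic difference is that the paper writes $\widehat W_{ij}=W_{ij}+(\widehat W_{ij}-W_{ij})$ and expands $\prod_{j\in A}\widehat W_{ij}$ by the multinomial formula over subsets $B\subset A$, which yields terms containing only factors $W_{ij}$ and $(\widehat W_{ij}-W_{ij})$ and therefore avoids your extra step of checking that $n^{-1}\sum_i|\widehat W_{ij}|^{p_j}=O_P(1)$; your telescoping decomposition is equally valid but slightly less economical for this reason.
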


\begin{proof}
In the non-serial case, using the multinomial formula, one has
\begin{eqnarray*}
\gamma_{\bK,A} \left(\widehat C_n^\maltese \right) &=&
n^{-1} \sum_{i=1}^n  \left[  \prod_{j\in A} \left\{ \cK_{j,F_{nj}}(X_{ij})-\mu_j\right\}\right]\\
&=& \gamma_{\bK,A} \left(C_n^\maltese \right) + n^{-1} \sum_{i=1}^n  \sum_{B\subset A,\; B\neq \emptyset} \left[ \prod_{j\in A\setminus{B}} \left\{\cK_{j,F_j}(X_{ij})-\mu_j\right\} \right] \\
&&
\qquad \times   \left[ \prod_{j\in B} \left\{\cK_{j,F_{nj}}(X_{ij})-\cK_{j,F_j}(X_{ij})\right\}\right].
\end{eqnarray*}
The result will be proven if one can show that for any $j\in A$, as $n\to\infty$,
$$
n^{-1}\sum_{i=1}^n \left| \cK_{j,F_{nj}}(X_{ij})-\cK_{j,F_j}(X_{ij})\right|^{p_j}\stackrel{Pr}{\longrightarrow} 0.
$$
The latter follows from Lemma \ref{lem:moment}. The proof in the serial case is similar.
\end{proof}

If there is dependence, one can asks if there is a central limit theorem for $\gamma_{\bK,A}\left(\widehat  C_n^\maltese\right)$ whenever $|A|=2$.
If the supports of $K_1$ and $K_2$ are bounded and there exists an open set $\cO$ such that the partial derivatives $\partial_{u_j} C^\maltese$ exist and are continuous, $j\in \{1,2\}$, then Theorem 1 in \cite{Genest/Neslehova/Remillard:2017} yields that $\widehat  \dC_n^\maltese = \sn\left( \widehat  \dC_n^\maltese -C^\maltese \right)$ converges in $C(\cO)$ to a continuous centred Gaussian process $\widehat \dC^\maltese$, and
$$
\sn \left\{\gamma_{\bK,A}\left(\widehat  C_n^\maltese\right)-\gamma_{\bK,A}\left(C^\maltese\right) \right\} =
\sn \int \left[ \widehat  C_n^\maltese \left\{K_1(x_1),K_2(x_2)\right\} - C_n^\maltese \left\{K_1(x_1),K_2(x_2)\right\}\right]dx_1 dx_2
$$
converges in law to $\disp \int_{K_1^{-1}(\cO)} \int_{K_2^{-1}(\cO)} \widehat \dC^\maltese \{K_1(x_1),K_2(x_2)\}dx_1 dx_1$. Based on \cite{Remillard/Papageorgiou/Soustra:2012} and \cite{Nasri:2022}, a similar result should hold for $ \widehat  \dC_n^{\maltese,s}$, but the limiting distribution of the serial multilinear copula has not been studied yet but in the case when the $(Y_t)$s are iid \citep{Nasri:2022}. Also, one might conjecture that with appropriate moment conditions, the central limit theorem should hold for $K_j$ with unbounded support.

\section{Asymptotic behaviour along contiguous alternatives and local power}\label{sec:contiguity}
%%%%%%%%%%%%%%%%%%%%%%%%%%%%%%%%%%%%%%%%%%%%%%%%%%%%%%%%%%%%%%%%%%%%%%%%%%%%%%%%%%%%%%%%%%
%%%%%%%%%%%%%%%%%%%%%%%%%%%%%%%%%%%%%%%%%%%%%%%%%%%%%%%%%%%%%%%%%%%%%%%%%%%%%%%%%%%%%%%%%%
In this section, we consider contiguous alternatives of the form $C_{\btheta_n}$, where $C_{\btheta_0} = \Pi$, for some $\theta_0$,  and $\btheta_n = \btheta_0+ \sni \bdelta$, \BB{ where $\btheta_0$ and $\bdelta$ are column vectors in $\bR^p$.}
In the serial case, it is assumed that the sequence $(U_t)_{t\ge 1}$ is
 a $d$-Markov process with copula $C_{\btheta_n}$, meaning that the distribution of $(U_t,\ldots, U_{t+1-d})$ is $C_{\btheta_n}$, with density $c_{\btheta_n}$, and the joint density of $(U_1,\ldots, U_n)$ at
  $(u_1,\ldots,u_n)$   is
\begin{equation}\label{eq:markov}
c_{d-1,\theta_n}(u_{d-1},\ldots, u_1) \times  \prod_{t=d}^n\frac{c_{\btheta_n}(u_t,u_{t-1},\ldots,u_{t+1-d})}{ c_{d-1,\btheta_n}(u_{t-1},\ldots,u_{t+1-d})},
\end{equation}
where $\disp c_{d-1,\btheta_n}(u_{d-1},\ldots, u_1) = \int_0^1 c_{\btheta_n}(s,u_{d-1},\ldots,u_1)ds$.
It is assumed that the copula family $C_\btheta$ is smooth enough, namely that the Conditions 1--2 in \cite{Genest/Neslehova/Remillard/Murphy:2019} are met. More precisely, \BB{these conditions are that}
 $C_\btheta$ has a continuous density $c_\btheta$  continuously differentiable  with square integrable gradient $\dot c_\btheta$ in a neighbourhood of $\btheta_0$, with $\dot c = \left. \nabla_\btheta c_\btheta (\bu)\right|_{\btheta=\btheta_0}$,  $\bu \in (0,1)^d$,  $\disp \dot C (\bu) = \int_{(0,\bu]} \dot c (\bs) d\bs$, and
\begin{equation}\label{eq:condvdVW}
\lim_{n\to \infty}\int_{(0,1)^d}  [ {n}^{1/2} [ \{  c_{\btheta_n} (\bu) \}^{1/2} -1] - \bdelta^\top \dot c
(\bu)/2  ]^2 d\bu = 0.
\end{equation}
\BB{Here, $\nabla_\btheta f_\btheta $ is the column vector with components $\partial_{\theta_j} f_\btheta$, $j\in \setp$. }
\BB{ Using the mean value theorem, one can see that the following stronger conditions implies \eqref{eq:condvdVW}:
\begin{eqnarray}\label{eq:condvdVW2a}
\lim_{n\to \infty}\int_{(0,1)^d} \sup_{\|\btheta-\btheta_0\|\le n^{-1/2}\|\bdelta\|} \left\|\dot c_\btheta(\bu)- \dot c
(\bu)\right\|^2 d\bu &=& 0,\\
\limsup_{n\to \infty}\int_{(0,1)^d} \sup_{\|\btheta-\btheta_0\|\le n^{-1/2}\|\bdelta\|} \left\|\dot c_\btheta(\bu)\right\|^4 d\bu &<& \infty.
\label{eq:condvdVW2b}
\end{eqnarray}
As exemplified in the Appendix \ref{app:supp}, the latter conditions are met for several bivariate copula families, including the Gaussian, Farlie-Gumbel-Morgenstern, Clayton and Frank. However they do not hold for Gumbel's copula since $\dot c$ is not square integrable.}
Before stating the limiting distribution under the sequence of contiguous alternatives $C_{\btheta_n}$, for any $A\in\cN_d$, set $q_A = \cM_A(\dot C)$.
It follows from Lemma \ref{lem:commut}, stated in the Appendix, and proven in \cite{Nasri:2022}, that in the non-serial case,
$ \disp \dM_{\bF}(q_A)= \cM_A \circ \dM_\bF\left(\dot C\right) = \cM_A\left(\dot C^\maltese\right)$, while in the serial case,
$\disp
\dM_{F^{\otimes d}}(q_A)= \cM_A \circ \dM_{F^{\otimes d}}\left(\dot C\right) = \cM_A\left(\dot C^{\maltese,s}\right)$.
Under the previous conditions, the following results were obtained by \cite{Genest/Neslehova/Remillard/Murphy:2019} in the non-serial case, and by \cite{Nasri:2022} in the serial case.

\begin{thm}\label{thm:contiguous}
Under the sequence of contiguous alternatives $C_{\btheta_n}$,  in the non-serial case, the processes $\dG_{A,n}^\maltese$, $A\in\cN_d$, converge jointly in $\ell^\infty\left((0,1)^d\right)$ to  $\dG_A^\maltese +  \bdelta^\top  \cM_A(\dot C^\maltese)$.  Furthermore, in the serial case, the processes $\dG_{A,n}^{\maltese,s}$, $A\in\cS_d$, converge jointly
 in $\ell^\infty\left((0,1)^d\right)$ to  $\dG_A^{\maltese,s} +  \bdelta^\top \cM_A(\dot C^\maltese)$.
\end{thm}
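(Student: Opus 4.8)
The plan is to transfer the null limit of Theorem~\ref{thm:main} to the alternative by Le Cam's third lemma. Since $C_{\btheta_0}=\Pi$ has uniform density, condition~\eqref{eq:condvdVW} makes $C_{\btheta_n}$ a locally asymptotically normal (LAN) perturbation of the null, and under such a perturbation each process $\dG_{A,n}^\maltese$ retains its null Gaussian fluctuations but acquires a deterministic drift equal to its asymptotic covariance with the log-likelihood. The whole proof reduces to identifying that drift with $\bdelta^\top\cM_A(\dot C^\maltese)$.

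First I would set up the LAN expansion. In the non-serial case the log-likelihood ratio is $\Lambda_n=\sum_{i=1}^n\log c_{\btheta_n}(\bU_i)$ (recall $c_{\btheta_0}\equiv 1$), and the quadratic-mean differentiability hypothesis~\eqref{eq:condvdVW} is exactly what is needed to write
\[
\Lambda_n=\sni\sum_{i=1}^n\bdelta^\top\dot c(\bU_i)-\tfrac12\,\bdelta^\top\cI\,\bdelta+o_P(1),
\qquad
\cI=\int_{(0,1)^d}\dot c(\bu)\,\dot c(\bu)^\top\,d\bu,
\]
the leading term being a centred iid sum (as $\int\dot c\,d\bu=0$), so that $\Lambda_n\Rightarrow N(-\tfrac12\bdelta^\top\cI\bdelta,\ \bdelta^\top\cI\bdelta)$; Le Cam's first lemma then yields contiguity of $\{C_{\btheta_n}\}$ to $\{\Pi\}$.

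Next I would prove joint weak convergence, under the null $\Pi$, of the family $\{\dG_{A,n}^\maltese:A\in\cN_d\}$ together with $\Lambda_n$ in $\ell^\infty\left((0,1)^d\right)$ times $\dR$. Convergence of the marginals is Theorem~\ref{thm:main} and the central limit theorem above; joint convergence follows from finite-dimensional convergence (a multivariate central limit theorem for the vector whose entries are the summands of the $\dG_{A,n}^\maltese(\bu)$ together with $\bdelta^\top\dot c(\bU_i)$) combined with the asymptotic tightness already granted by Theorem~\ref{thm:main}, and the limit is jointly Gaussian. The crux is the cross-covariance
\[
\lim_{n\to\infty}\cov\Bigl\{\dG_{A,n}^\maltese(\bu),\ \sni\sum_{i=1}^n\bdelta^\top\dot c(\bU_i)\Bigr\}
=\bdelta^\top\cM_A(\dot C^\maltese)(\bu).
\]
Because the summands of $\dG_{A,n}^\maltese$ are asymptotically the centred products $\prod_{j\in A}[D\{\cdots\}-u_j]$, this covariance is a single integral of $\bdelta^\top\dot c$ against Möbius-transformed kernels, and the identification of the result with $\cM_A(\dot C^\maltese)=\dM_\bF(q_A)$ is precisely where the commutation Lemma~\ref{lem:commut} enters, letting me move the multilinearization operator $\dM_\bF$ past the Möbius transform $\cM_A$. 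I expect this to be the main obstacle, since the arbitrary (possibly atomic) margins force me to carry the conditional maps $J_{F_j}$ through the computation rather than relying on any continuous-margin simplification.

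Finally, with the joint Gaussian limit and the computed covariance in hand, Le Cam's third lemma gives that under $C_{\btheta_n}$ the processes $\dG_{A,n}^\maltese$ converge to $\dG_A^\maltese+\bdelta^\top\cM_A(\dot C^\maltese)$: the covariance structure is unchanged and only the mean shifts by the cross-covariance, so the independence across $A\in\cN_d$ is inherited from the null limit. The serial case is handled identically, replacing the iid log-likelihood by the $d$-Markov factorization~\eqref{eq:markov}, whose LAN expansion follows from the same smoothness conditions via a martingale central limit theorem for the score, and replacing $\dM_\bF$ by $\dM_{F^{\otimes d}}$; the corresponding commutation $\dM_{F^{\otimes d}}(q_A)=\cM_A(\dot C^{\maltese,s})$ then delivers the serial drift.
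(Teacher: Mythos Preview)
Your proposal is sound, but note that the paper does not actually prove this theorem: it is stated as a result imported from \cite{Genest/Neslehova/Remillard/Murphy:2019} (non-serial case) and \cite{Nasri:2022} (serial case), with the sentence ``the following results were obtained by \ldots'' immediately preceding the statement. There is therefore no in-paper proof to compare against.

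That said, your Le Cam route is precisely the machinery those references use, so your sketch is consistent with the literature the paper invokes. One point worth tightening: when you compute the cross-covariance between $\dG_{A,n}^\maltese(\bu)$ and the score $\sni\sum_i\bdelta^\top\dot c(\bU_i)$, the summands of $\dG_{A,n}^\maltese$ involve the \emph{empirical} margins $F_{nj}$, not $F_j$, so the asymptotic linearization of $\dG_{A,n}^\maltese$ has additional terms coming from estimating the margins. Under $\Pi$ these extra terms are sums of products over strict subsets of $A$ and are therefore orthogonal (in $L^2(\Pi)$) to $\dot c(\bU)$, which itself integrates to zero in each coordinate; hence they do not contribute to the drift. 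You implicitly rely on this when you say ``the summands \ldots are asymptotically the centred products,'' and it is the step that makes the identification $E\bigl[\prod_{j\in A}\{J_{F_j}(X_j,u_j)-u_j\}\,\dot c(\bU)\bigr]=\cM_A(\dot C^\maltese)(\bu)$ go through cleanly via Lemma~\ref{lem:commut}.
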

\begin{remark}
   \cite{Nasri:2022} also considered Poisson contiguous alternatives with conditional mean $\lambda_{t,n} = \lambda_0+ \delta \sni Y_{t-1}$. In this case,  for any $A\in\cS_d$, the processes $\dG_{A,n}^{\maltese,s}$ converge jointly
 in $\ell^\infty\left((0,1)^d\right)$ to  $\dG_A^{\maltese,s} +  \frac{\delta}{\lambda_0} \I\{A=\{1,2\}\}\dM_F(f)(u_1) \dM_F(f)(u_2)$, where $f(u) =  \left\{\cL_F(u)-\lambda_0 u\right\}$, and $F$ is the cdf of the Poisson with parameter $\lambda_0$.
\end{remark}
As a corollary, we obtain the asymptotic behaviour of the proposed statistics for testing independence or randomness  under  the sequence of contiguous alternatives $C_{\btheta_n}$.

\begin{corollary} \label{cor:contig}
  Under the sequence of contiguous alternatives $C_{\btheta_n}$,  in the non-serial case,  the random variables $\sn \gamma_{\bK,A}\left(\widehat C_{n}^\maltese\right)$, $A\in\cN_d$, converge jointly
  to  independent Gaussian random variables with mean  $\bdelta^\top  \; \dot \gamma_{\bK,A}\left(C^\maltese\right)$ and variance $\varsigma_{\bK,\bF,A}^2$,
  where
\begin{equation}\label{eq:meannonserial}
  \dot \gamma_{\bK,A}\left(C^\maltese\right) = \int \dot c_A(\bu)  \prod_{j\in A}\left\{\cK_{j,F_j}\circ F_j^{-1}(u_j)-\mu_j\right\} d\bu,
\end{equation}
and $C_A$ is the copula restricted to components $U_j$ with $j\in A$.
  Furthermore, in the serial case,  the random variables $\sn \gamma_{\bK,A}\left(\widehat C_{n}^{\maltese,s}\right)$, $A\in\cS_d$, converge jointly
  to  independent Gaussian random variables with mean  $\bdelta^\top   \;  \dot \gamma_{\bK,A}\left(C^{\maltese,s}\right)$ and variance $\varsigma_{\bK,F,A}^2$,
  where
  \begin{equation}\label{eq:meanserial}
  \dot \gamma_{\bK,A}\left(C^{\maltese,s}\right) = \int \dot c_A(\bu)  \prod_{j\in A}\left\{\cK_{j,F}\circ F^{-1}(u_j)-\mu_j\right\} d\bu.
\end{equation}
\end{corollary}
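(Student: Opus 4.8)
The plan is to recognise $\sn\gamma_{\bK,A}\left(\widehat C_n^\maltese\right)$ as the image of the M\"obius process $\dG_{A,n}^\maltese$ under one fixed, continuous linear functional, and then push that functional through the weak limit already supplied by Theorem \ref{thm:contiguous}. By the very definition of $\gamma_{\bK,A}$, one has $\sn\gamma_{\bK,A}\left(\widehat C_n^\maltese\right) = \Lambda_A\left(\dG_{A,n}^\maltese\right)$, where $\Lambda_A(G) = (-1)^{|A|}\int_{\dR^A} G\{\bK(\bx)\}\,d\bx$. First I would argue that $\Lambda_A$ is continuous for the topology of $\ell^\infty\left((0,1)^d\right)$ on the relevant processes. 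This is exactly the issue already settled in the derivation of Corollary \ref{cor:dep}: when $K_j^{-1}$ is unbounded one integrates over large compact sets and bounds the tail using square-integrability of $K_j^{-1}$, i.e.\ finiteness of $\sigma_j^2$. Under the contiguous sequence $C_{\btheta_n}$ this truncation remains legitimate, because any remainder that is $o_P(1)$ under independence stays $o_P(1)$ under the contiguous alternative by Le Cam's first lemma; contiguity is precisely what lets me re-use the null-hypothesis estimates here.

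Applying $\Lambda_A$ to the limit $\dG_A^\maltese + \bdelta^\top \cM_A\left(\dot C^\maltese\right)$ of Theorem \ref{thm:contiguous}, the continuous mapping theorem then gives that $\sn\gamma_{\bK,A}\left(\widehat C_n^\maltese\right)$ converges in law to $\Lambda_A\left(\dG_A^\maltese\right) + \bdelta^\top \Lambda_A\left\{\cM_A\left(\dot C^\maltese\right)\right\}$. The random summand $\Lambda_A\left(\dG_A^\maltese\right)$ is centred Gaussian, and since it is literally the null limit identified in Corollary \ref{cor:dep}, its variance equals $\varsigma_{\bK,\bF,A}^2$; adding the deterministic drift leaves this variance unchanged. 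It therefore only remains to identify the constant $\bdelta^\top \Lambda_A\left\{\cM_A\left(\dot C^\maltese\right)\right\}$ with the claimed mean $\bdelta^\top \dot\gamma_{\bK,A}\left(C^\maltese\right)$ of \eqref{eq:meannonserial}.

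The evaluation $\Lambda_A\left\{\cM_A\left(\dot C^\maltese\right)\right\} = \dot\gamma_{\bK,A}\left(C^\maltese\right)$ is the step I expect to be the real obstacle. The idea is to reuse the computation behind Proposition \ref{prop:DG}, which is nothing but the statement that $\Lambda_A$ applied to an object of multilinear product form $\dM_\bF(\cdot)$ factors through the coordinatewise Hoeffding integrals $-\int_\dR\left[\,\cdot\, - K_j(x)\right]dx = \cK_{j,F_j}(\cdot)-\mu_j$. By Lemma \ref{lem:commut}, $\cM_A\left(\dot C^\maltese\right) = \dM_\bF(q_A)$ with $q_A = \cM_A\left(\dot C\right)$, so feeding $q_A$ into this same identity and re-expressing the resulting integral over $\dR^A$ as an integral over $(0,1)^A$ against $\dot c_A$, via the substitution $u_j = F_j(x_j)$, turns each coordinate factor $\cK_{j,F_j}(x_j)-\mu_j$ into $\cK_{j,F_j}\circ F_j^{-1}(u_j)-\mu_j$, which is exactly the right-hand side of \eqref{eq:meannonserial}. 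The delicate points are twofold: first, justifying the change of variables in the presence of atoms of $F_j$, where the averaging operator $\cK_{j,F_j}\circ F_j^{-1}$ is precisely what absorbs the jumps and makes the identity valid for arbitrary (not necessarily continuous) margins; and second, checking finiteness of the limiting mean by Cauchy--Schwarz, using square-integrability of $\dot c$ (guaranteed by conditions \eqref{eq:condvdVW2a}--\eqref{eq:condvdVW2b}) together with finiteness of $\sigma_j^2$ for the weights.

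Finally, independence of the limits across $A\in\cN_d$ follows because the vector $\left(\sn\gamma_{\bK,A}\right)_{A\in\cN_d}$ is jointly asymptotically Gaussian, being a vector of continuous linear images of the jointly Gaussian (plus deterministic drift) limit of Theorem \ref{thm:contiguous}; the mutual independence of the processes $\left\{\dG_A^\maltese\right\}$ asserted in Theorem \ref{thm:main} forces all cross-covariances to vanish, and the deterministic means do not affect covariances, so zero covariance together with joint normality yields independence. The serial case is handled identically, replacing Theorem \ref{thm:contiguous}, Proposition \ref{prop:DG} and Theorem \ref{thm:main} by their serial counterparts, $\bF$ by $F^{\otimes d}$, and $\cN_d$ by $\cS_d$, which produces the mean \eqref{eq:meanserial} and variance $\varsigma_{\bK,F,A}^2$.
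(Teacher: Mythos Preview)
Your proposal is correct and follows essentially the same approach as the paper: the corollary is stated immediately after Theorem \ref{thm:contiguous} with the understanding that one applies the same linear functional $\Lambda_A$ (and the same truncation argument as for Corollary \ref{cor:dep}) to the shifted limit, and the Remark following the corollary identifies the mean exactly as you do, via Proposition \ref{prop:DG}. The only cosmetic difference is that the paper obtains \eqref{eq:meannonserial} by first writing $\gamma_{\bK,A}\left(C_\btheta^\maltese\right)=E_\btheta\left[\prod_{j\in A}\{\cK_{j,F_j}(X_j)-\mu_j\}\right]$ and then differentiating in $\btheta$, which sidesteps the change-of-variables with atoms that you flag as delicate; your direct route through Lemma \ref{lem:commut} and $\dM_\bF(q_A)$ arrives at the same place.
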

Note that if the margin $F_j$ is continuous, $\cK_{j,F_j}\circ F_j^{-1}(u_j) = K_j^{-1}(u_j)$. In particular, in the serial case, if the margin $F$ is continuous, then
$\disp  \dot \gamma_{\bK,A}\left(C^{\maltese,s}\right) = \int \dot c_A(\bu)  \prod_{j\in A}\left\{K_j^{-1}(u_j)-\mu_j\right\} d\bu$.
\begin{remark} Since
$\disp
\cM_A\left(C_\btheta^\maltese\right)(\bu) =  E_\btheta\left[  \prod_{j\in A}\left\{J_{F_j}(X_j,u_j)-u_j\right\}\right]$,
Proposition \ref{prop:DG} in Appendix \ref{app:aux} yields
\begin{eqnarray*}
    \gamma_{\bK,A}\left(C_\btheta^\maltese\right) &=&  (-1)^{|A|} \int_{\dR^A} E_\btheta\left[  \prod_{j\in A}\left[ J_{F_j}\{X_j,K_j(x_j)\} - K_j(x_j)\right]\right]d\bx = E_\btheta\left[  \prod_{j\in A}\left\{\cK_{j}(X_j)-\mu_j\right\}\right],
\end{eqnarray*}
so  $\disp \dot \gamma_{\bK,\bF,A} =  \left. \partial_\btheta  \; \gamma_{\bK,A}\left(C_\btheta^\maltese\right) \right|_{\btheta = \btheta_0}$.
As a result,  one obtains formulas \eqref{eq:meannonserial} and \eqref{eq:meanserial}.
In particular, if $\disp \dot c_A = \prod_{j\in A} J_j(u_j)$, then in the non-serial case, for any $A\in\cN_d$,
\begin{equation}\label{eq:localpowernonserial}
\dot \gamma_{\bK,A} \left(C^\maltese\right) = \prod_{j\in A} \int_0^1 \left\{\cK_{j}\circ F_j^{-1}(u_j)-\mu_j\right\}J_j(u_j) du_j = \prod_{j\in A} \cov\left\{\cK_{j}\circ F_j^{-1}(U), J_j(U)\right\} ,
\end{equation}
where $U\sim \unif$, while in the serial case, for any $A\in\cS_d$,
\begin{equation}\label{eq:localpowerserial}
\dot \gamma_{\bK,A} \left(C^{\maltese,s}\right) = \prod_{j\in A} \cov\left\{\cK_{j}\circ F^{-1}(U), J_j(U)\right\}.
\end{equation}
\end{remark}

%%%%%%%%%%%%%%%%%%%%%%%%%%%%%%%%%%%%%%%%%%%%%%%%%%%%%%%%%%%%%%%%%%%%%%%%%%%
\subsection{Applications for local power}
%%%%%%%%%%%%%%%%%%%%%%%%%%%%%%%%%%%%%%%%%%%%%%%%%%%%%%%%%%%%%%%%%%%%%%%%%%%

First note that for many copula families satisfying the smoothness conditions listed at the beginning of the section, one has $\dot c_A(\bu) = \prod_{j\in A} J(u_j)$, and $J_j$ is often a quantile function. In this case, choosing $K_j^{-1}=J_j$ would make sense in order to have a non-zero mean, and hence having more local power by maximising formulas
\eqref{eq:localpowernonserial}--\eqref{eq:localpowerserial}. This is what was proposed in \cite{Genest/Verret:2005} in the bivariate case, where the margins were assumed to be continuous. In fact, the next proposition shows that this choice is also optimal for any margins. The proof of the following result is given in Appendix \ref{app:pf-are}.

\begin{prop}\label{prop:are} Suppose that  $\dot c_A(\bu) \propto \prod_{j\in A} G_j^{-1}(u_j)$, $\bu\in (0,1)^d$, where $\bG=(G_1,\ldots, G_d)$ is a vector of margins with means $\left(\tilde\mu_1,\ldots,\tilde \mu_d\right)$ and variances $\left(\tilde\sigma_1^2,\ldots,\tilde\sigma_d^2\right)$, and assume $U\sim \unif$. Then, in the non-serial case
$\disp \cov\left\{\cK_{j}\circ F_j^{-1}(U), G_j^{-1}(U)\right\} =\cov\left\{\cK_{j}(X_j), \cG_{j}(X_j)\right\} $, $j\in\setd$, where $X_j\sim F_j$, so
$\dot \gamma_{\bK,A} \left(C^\maltese\right)= \prod_{j\in A} \cov\left\{\cK_{j}(X_j), \cG_{j}(X_j)\right\}$. In particular, if $\bK=\bG$, then
 $\dot \gamma_{\bK,A} \left(C^\maltese\right)= \varsigma_{\bK,\bF,A}^2$. In the serial case,
$ \disp \cov\left\{\cK_{j}\circ F^{-1}(U), G_j^{-1}(U)\right\}=\cov\left\{\cK_{j}(X), \cG_{j}(X)\right\}$, $j\in\setd$, where $X\sim F$, so $\dot \gamma_{\bK,A} \left(C^{\maltese,s}\right)= \prod_{j\in A}
\cov\left\{\cK_{j}(X), \cG_{j}(X)\right\}$. In particular, if $\bK=\bG$, then
$\dot \gamma_{\bK,A} \left(C^{\maltese,s}\right)= \varsigma_{\bK,F,A}^2$.
\end{prop}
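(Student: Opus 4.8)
The plan is to reduce the whole statement to the single–coordinate covariance identity
$$
\cov\left\{\cK_{j}\circ F_j^{-1}(U),\, G_j^{-1}(U)\right\} = \cov\left\{\cK_{j}(X_j),\, \cG_{j}(X_j)\right\},\qquad X_j\sim F_j,\ U\sim\unif,
$$
because once this is in hand the product formulas for $\dot\gamma_{\bK,A}\left(C^\maltese\right)$ and $\dot\gamma_{\bK,A}\left(C^{\maltese,s}\right)$ follow at once from \eqref{eq:localpowernonserial} and \eqref{eq:localpowerserial}: the hypothesis $\dot c_A\propto\prod_{j\in A}G_j^{-1}(u_j)$ means we may take $J_j$ proportional to $G_j^{-1}$, and since the covariance is bilinear in its second argument the proportionality constant merely factors out (it is in any case immaterial for the ensuing efficiency comparisons). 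So I would first note that the claim factorizes coordinatewise, fix one index $j$, and abbreviate $F=F_j$, $K=K_j$, $G=G_j$.

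The heart of the matter is a conditioning computation. Set $X=F^{-1}(U)$, so that $X\sim F$, and observe that $\cK\circ F^{-1}(U)=\cK(X)$ is $\sigma(X)$–measurable. The key step is to identify the conditional law of $U$ given $X$. If $x$ is a continuity point of $F$ then $X=x$ forces $U=F(x)$, whereas if $x$ is an atom, so that $\Delta_{F}(x)>0$, then $F^{-1}(U)=x$ exactly when $F(x-)<U\le F(x)$, and conditionally on this event $U=F(x-)+s\Delta_{F}(x)$ with $s\sim\unif$. Averaging $G^{-1}(U)$ over this conditional law reproduces precisely the integral defining $\cG_{j,F}$, so that $E\{G^{-1}(U)\mid X\}=\cG_{j}(X)$; running the same computation with $K^{-1}$ in place of $G^{-1}$ gives $E\{K^{-1}(U)\mid X\}=\cK(X)$, whence $E\{\cK(X)\}=\mu_j$ and $E\{\cG_j(X)\}=E\{G^{-1}(U)\}=\tilde\mu_j$.

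With this the identity is just the tower property:
$$
\cov\left\{\cK(X),\,G^{-1}(U)\right\} = E\left[\cK(X)\,E\{G^{-1}(U)\mid X\}\right]-\mu_j\tilde\mu_j = E\{\cK(X)\cG_j(X)\}-\mu_j\tilde\mu_j = \cov\{\cK(X),\cG_j(X)\},
$$
using that $\cK(X)$ is $X$–measurable. Taking products over $j\in A$ delivers the stated formulas for $\dot\gamma_{\bK,A}\left(C^\maltese\right)$ and $\dot\gamma_{\bK,A}\left(C^{\maltese,s}\right)$. Finally, specializing to $\bK=\bG$ gives $\cov\{\cK_j(X_j),\cK_j(X_j)\}=\var\{\cK_j(X_j)\}=\varsigma_{K_j,F_j}^2$ by Remark \ref{rem:varsigma}, and the product over $A$ equals $\varsigma_{\bK,\bF,A}^2$ by its definition in Corollary \ref{cor:dep}; the serial case is word for word the same with $F$ and a single $X\sim F$ in place of $F_j$ and $X_j$.

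I expect the main obstacle to be the careful verification of the conditional law of $U$ given $X=F^{-1}(U)$ at the atoms of $F$, that is, confirming that conditioning on $X$ landing in a jump makes $U$ uniform on the corresponding interval $(F(x-),F(x)]$, so that the conditional average of $G^{-1}(U)$ matches the defining integral of $\cG_{j,F}$ and, simultaneously, that the probability of this event is $\Delta_F(x)=P(X=x)$. This is exactly the distributional fact underpinning the multilinear construction $U_j=\psi_{F_j}(X_j,V_j)$, so it is natural in the present framework, but it is the one place where the discontinuity of $F$ must be handled with care rather than waved away.
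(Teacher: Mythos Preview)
Your proposal is correct and takes essentially the same approach as the paper: both reduce to the single–coordinate covariance identity by splitting $(0,1)$ according to the atoms of $F_j$ and using that averaging $G_j^{-1}$ over an atomic interval $(F_j(x-),F_j(x))$ yields $\Delta_{F_j}(x)\,\cG_{j,F_j}(x)$. You package this as the conditional expectation $E\{G_j^{-1}(U)\mid X_j\}=\cG_j(X_j)$ followed by the tower property, whereas the paper carries out the equivalent direct integral decomposition over $\cI_j=\bigcup_{x\in\cA_j}(F_j(x-),F_j(x))$ and its complement; the computations are the same, and both finish with Remark~\ref{rem:varsigma} for the $\bK=\bG$ case.
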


\begin{remark}\label{rem:ARE}
 Under the assumptions of Proposition \ref{prop:are}, it follows from
 Proposition 3 in \cite{Genest/Verret:2005} that
 the ARE between the test based on $\bK$, and $\bG$ is given by
 $\disp  \prod_{j\in A}\cor^2\left\{\cK_{j}(X_j), \cG_{j}(X_j)\right\}$
 in the non-serial case, and the ARE is
  $ \disp  \prod_{j\in A}\cor^2\left\{\cK_{j}(X), \cG_{j}(X)\right\}$
 in the serial case. This shows that whenever $\dot c_A(\bu) \propto \prod_{j\in A} G_j^{-1}(u_j)$, the ARE is maximised by taking $\bK=\bG$. Moreover, this result is independent of the margins, although the solution might not be unique.
 This is the case for example for a Bernoulli margin in the serial case. In fact, for any $A\in\cS_d$, using the coefficients $r_{A,n}$ defined in Lemma \ref{lem:varest}, one gets that
 $n r_{A,n}^2 = \frac{\dZ_{A,n}^2}{\{p_n(1-p_n)\}^{|A|}}$, where $\disp p_n = n^{-1}\sum_{t=1}^n \{Y_t=1\}$, and $\disp \dZ_{A,n} = \sni\sum_{t=1}^n \prod_{j\in A}
 \left[ \I\{Y_{t+1-j}=1\}-p_n\right]$. In the non-serial case, if all margins are Bernoulli, $\disp p_{nj} = n^{-1}\sum_{i=1}^n \{X_{ij}=1\}$, and $\disp \dZ_{A,n} = \sni\sum_{i=1}^n \prod_{j\in A}
 \left[ \I\{X_{ij}=1\}-p_{nj}\right]$, $A\in \cN_d$, then $nr_{A,n}^2 = \frac{\dZ_{A,n}^2}{\prod_{j\in A}p_{nj}(1-p_{nj})}$.
\end{remark}

We now consider some copula families studied in the continuous case by  \cite{Genest/Quessy/Remillard:2007}.
\begin{example}\label{ex:are}
If $C_\theta$ is the equicorrelated Gaussian copula, then
$\disp  \dot c_A(\bu) = \sum_{B\subset A, |B|=2}\prod_{j\in B}\Phi^{-1}(u_j)$. It follows from \eqref{eq:localpowernonserial}--\eqref{eq:localpowerserial}
that in the non-serial case,
$\disp
\dot \gamma_{\bK,A} \left(C^\maltese\right) = \I\{|A|=2\} \prod_{j\in A} \cov\left\{\cK_{j}\circ F_j^{-1}(U), \Phi^{-1}(U)\right\}$,
while in the serial case,
$\disp
\dot \gamma_{\bK,A} \left(C^{\maltese,s}\right)= \I\{|A|=2\} \prod_{j\in A}\cov\left\{\cK_{j}\circ F^{-1}(U), \Phi^{-1}(U)\right\}$.
As a result, van der Waerden's coefficients should be locally the most powerful when restricted to pairs, i.e., when $|A|=2$. This is not surprising since it coincides with  Pearson's correlation between  $\Phi^{-1}(U_1)$ and  $\Phi^{-1}(U_2)$, if $(U_1,U_2)\sim C_\theta$.

For the Farlie-Gumbel-Morgensten's copula family,
$\disp \dot c_A(\bu) = \I\{A=\setd\}\prod_{j=1}^d (1-2u_j)$.
It follows that in the non-serial case,
$\disp
\dot \gamma_{\bK,A} \left(C^\maltese\right) = 2^d (-1)^d \I\{A=\setd\} \prod_{j=1}^d \cov\left\{\cK_{j}\circ F_j^{-1}(U), U\right\}$,
and in the serial case,
$\disp
\dot \gamma_{\bK,A} \left(C^{\maltese,s}\right)= 2^d (-1)^d \I\{A=\setd\} \prod_{j=1}^d  \cov\left\{\cK_{j}\circ F^{-1}(U), U\right\}$,
so Spearman's rho with $A=\setd$ should be locally the most powerful.

For Claytons's copula family,
$\disp \dot c_A(\bu) = \sum_{B\subset A, |B|=2}\prod_{j\in B}(1+\log{u_j})$. In the non-serial case, one gets
$\disp
\dot \gamma_{\bK,A} \left(C^\maltese\right) = \I\{|A|=2\} \prod_{j\in A} \cov\left\{\cK_{j}\circ F_j^{-1}(U), \log{U}\right\}$,
and in the serial case,
$\disp
\dot \gamma_{\bK,A} \left(C^{\maltese,s}\right)= \I\{|A|=2\} \prod_{j\in A}\cov\left\{\cK_{j}\circ F^{-1}(U), \log{U}\right\}$.
As a result, Savage's coefficients for pairs should be locally the most powerful.

Finally, for Frank's copula family,
$\disp
\dot c_A(\bu) = \frac{|A|-1}{2} + 2^{|A|-1}\prod_{j\in A}u_j - \sum_{j\in A} u_j$, and it then follows from formulas \eqref{eq:meannonserial}--\eqref{eq:meanserial}
that in the non-serial case,
$
\dot \gamma_{\bK,A} \left(C^\maltese\right) = 2^{|A|-1}  \prod_{j\in A} \cov\left\{\cK_{j}\circ F_j^{-1}(U), U\right\}$,
and in the serial case,
$
\dot \gamma_{\bK,A} \left(C^{\maltese,s}\right)=  2^{|A|-1}  \prod_{j\in A} \cov\left\{\cK_{j}\circ F^{-1}(U), U\right\}$.
So even if $\dot c_A$ is not a product,  $\dot \gamma_{\bK,A} $ can be computed. As a result, Spearman's rho for all sets should be locally the most powerful. The good performance of combination of Spearman's rho for pairs was confirmed in numerical experiments in the serial case for Frank's family; see, e.g., \cite{Nasri:2022}.
\end{example}

\subsection{Wald's test statistics}\label{ssec:combination}

As stated in Corollary \ref{cor:corre}, in the non-serial case, the limiting distributions of the statistics $\sn r_{A,n}$ are independent.  \cite{Littell/Folks:1971,Littell/Folks:1973} showed that, in terms of Bahadur's efficiency, the ``best'' test  in this case is given by
$\disp -2\sum_{A\in\cN_d}\log\left\{2-2\Phi\left(\sn|r_{A,n}|\right)\right\}$,  where $\Phi$ is the cdf of the standard Gaussian distribution. However, given that we have independent standard Gaussian limits, there is not much difference in terms of power by using the Wald's statistic, i.e.,  the sum  of squared statistics $\disp L_{n,p,d} = n \sum_{A\subset \cN_d, |A|\le p}r_{A,n}^2$ or $\disp L_{n,p,d} = n \sum_{A\subset \cS_d, |A|\le p}r_{A,n}^{2}$.
This was shown numerically in \cite{Nasri:2022}.
In the non-serial case, one could consider all sets $A\in \cN_d$, so $L_{n,d,d}$ has approximately a chi-square distribution with $2^d-d-1$ degrees of freedom, or consider only the pairs, i.e., $L_{n,2,d}$, which has approximately  a chi-square distribution with $\frac{d(d-1)}{2}$ degrees of freedom. In the serial case,
$L_{n,d,d}$ has approximately a chi-square distribution with $2^{d-1}-1$ degrees of freedom, while $L_{n,2,d}$, which has approximately  a chi-square distribution with $d-1$ degrees of freedom. One can also draw dependograms, i.e., graphs of $\sn r_{A,n}$ plotted as a functions of all possible sets $A$ or all pairs. These statistics are implemented in the CRAN package \textit{MixedIndTests} \citep{Nasri/Remillard/Neslehova/Genest:2022}.

 %%%%%%%%%%%%%%%%%%%%%%%%%%%%%%%%%%%%%%%%%%%%%%%%%%%%%%%%%%
 %%%%%%%%%%%%%%%%%%%%%%%%%%%%%%%%%%%%%%%%%%%%%%%%%%%%%%%%%%
 \section{Numerical experiments}\label{sec:num}
 %%%%%%%%%%%%%%%%%%%%%%%%%%%%%%%%%%%%%%%%%%%%%%%%%%%%%%%%%%
 %%%%%%%%%%%%%%%%%%%%%%%%%%%%%%%%%%%%%%%%%%%%%%%%%%%%%%%%%%

In what follows, we consider only the serial case \BB{with $d=5$} and the following copula families: independence, Tent map, Farlie-Gumbel-Morgenstern (FGM) (with $\theta=1$), and Clayton, Frank and Gaussian families with Kendall's tau of $0.1$.
Recall that the Tent map copula is the joint cdf of $(U_1, 2\min(U_1,1-U_1))$, with $U_1\sim \unif$. The generated series are all stationary and Markov, with the exception of the FGM which is $2$-Markov, as defined by \eqref{eq:markov}. We consider the same set of $7$ margins as in \cite{Nasri:2022}, namely $F_1$ is Bernoulli with $p=0.8$, $F_2$ is Poisson(6), $F_3$ is a negative binomial NB(r=1.5,p=0.2), $F_4$ is a mixture of $0$ with probability  $0.1$ and Poisson(10) otherwise, $F_5$ is a mixture of $0$ with probability  $0.1$ and N(0,1) otherwise, $F_6$ is a discretized Gaussian with $F_6^{-1}(u)= \left\lfloor 200 \Phi^{-1}(u)\right\rfloor$, and $F_7$ is a discrete Pareto with $F_7(k) = 1-\frac{1}{k+1}$, $k\in\dN$.
For the tests, we considered the statistics $L_{n,2,5}$ and $L_{n,5,5}$ for Spearman, van der Waerden, and Savage coefficients, for $n\in\{100,250,500\}$. The simulations results, based on $N=1000$ replications, appear in Table \ref{tab:ind+tent} for the independence  and the Tent map copulas, in Table \ref{tab:fgm+clayton} for the Farlie-Gumbel-Morgenstern copulas, and in Table \ref{tab:gaussian+frank} for the Gaussian and Frank copulas.

From the results for the independence copula in Table \ref{tab:ind+tent}, the empirical levels of the tests are quite satisfactory, being close to the 5\% target. Next, for the Tent map copula, Savage's test is surprisingly good, compared to the two other coefficients, with the exception of the Bernoulli margin $F_1$ which give the same results for all tests. The good performance of Savage's test might come from the fact that for continuous margins, the theoretical coefficient is not $0$, contrary to Spearman's rho and van der Waerden coefficients \citep{Remillard:2013}. Next, from Table \ref{tab:fgm+clayton}, without any surprise, the tests based on $L_{n,2,5}$
are not powerful for the Farlie-Gumbel-Morgenstern copula, given the calculations in Example \ref{ex:are}, while the best test is $L_{n,5,5}$ based on Spearman's rho, as predicted. Also from the computations in Example \ref{ex:are}, the tests based on Savage's coefficients are the best for the Clayton's copula. Finally, from the results in Table \ref{tab:gaussian+frank}, as predicted, the tests based on Spearman's rho are the best for Frank's copula, while  the tests based on van der Waerden's coefficient are the best for the Gaussian copula. These results all agree with the results in Example \ref{ex:are}, as well as the results of \cite{Genest/Verret:2005} for the bivariate case with continuous margins.

\section{Conclusion}
For the non-serial and serial settings, we defined tests of independence and tests of randomness derived from bivariate and multivariate extensions of several known copula-based dependence measures that are usually defined for observations with continuous distributions. \BB{We showed that working with observations with arbitrary distributions did not add computational difficulties, and even if the statistics are not margin-free, the simulation results proved that the power was quite good under alternative hypotheses.} We also deduced the locally most powerful tests based  on proposed statistics for some known copula families, whatever the margins. These results generalise the previous findings of \cite{Genest/Verret:2005} in the bivariate when the margins were assumed to be continuous. \BB{In future work, it would be interesting to study extensions of $\phi$-dependence measures for arbitrary data using the multililinear copula density.}

 \appendix
 %%%%%%%%%%%%%%%%%%%%%%%%%%%%%%%%%%%%%%%%%%%%%%%%%%%%%%%%%%
 %%%%%%%%%%%%%%%%%%%%%%%%%%%%%%%%%%%%%%%%%%%%%%%%%%%%%%%%%%
 \section{Auxiliary results} \label{app:aux}
 %%%%%%%%%%%%%%%%%%%%%%%%%%%%%%%%%%%%%%%%%%%%%%%%%%%%%%%%%%%
 %%%%%%%%%%%%%%%%%%%%%%%%%%%%%%%%%%%%%%%%%%%%%%%%%%%%%%%%%%

 Here we define two important transformations: the M\"obius transform and the multilinear interpolation.
For $A\in \cN_d$, the M\"obius transform  $\cM_A$ is defined by
$\disp
\cM_A(f)(\bu) = \sum_{B\subset A}(-1)^{|A\setminus B|} f\left(\bu^B\right)\prod_{j\in A\setminus B} u_j$,
where  $\bu^B \in [0,1]^d$ is such that
$\disp u^B_j = \left\{ \begin{array}{ll} u_j & \mbox{if } j \in B,\\
1 & \mbox{if } j \not \in B. \end{array} \right.$. In particular, if $f = f_1\otimes \cdots \otimes f_d$, i.e., $f(\bu)=\prod_{j=1}^d f_j(u_j)$, and $f_j(1)=1$, then $\cM_A(f) = \prod_{j=1}^d \{f_j(u_j)-u_j\}$. As a result, for any $A\in\cN_d$, $\cM_A(\Pi)\equiv 0$.
Next, following \cite{Genest/Neslehova/Remillard:2017}, for $\bF=(F_1,\ldots, F_d)$, we define the interpolation operator $\dM_{\bF}$. To this end, for arbitrary $\bu = (u_1,\ldots,u_d) \in [0,1]^d$ and $S \subseteq \{ 1, \ldots , d\}$, and for any $B\subset \setd$, set $(\bu_{F,B})_j = F_j\circ F_j^{-1}(u_j)$ if $j\in B$, and
$(\bu_{\bF,B})_j = F_j\left\{ F_j^{-1}(u_j)-\right\}$ if $j\notin B$. In particular, if $F_j$ is continuous at $F_j^{-1}(u_j)$, then $(\bu_{\bF,B})_j = u_j$ for any $B$. Note that $\bu_{\bF,S}$ is an element in the closure $\bar \cR_{\bF}$ of $\cR_{\bF} = \cR_{F_1} \times \cdots \times \cR_{F_d}$.
Further let $\ell_\infty (K)$ be the collection of bounded real-valued functions on $K \subseteq [0,1]^d$.  The multilinear interpolation operator $\dM_{\bF}$, is then defined for all $g \in \ell_\infty (  \bar \cR_{\bF} )$ and $\bu \in [0,1]^d$, by
$\disp \dM_{\bF}(g)(\bu) = \sum_{B\subset \setd} g(\bu_{\bF,B}) \left\{ \prod_{j \in B} \lambda_{F_j} (u_j) \right\} \left[ \prod_{j \in B^\complement} \{ 1-\lambda_{F_j}(u_j) \} \right]$.
In particular, if $\disp g(\bu) = \prod_{j=1}^d g_j(u_j)$, then
$\disp \dM_{\bF}(g)(\bu) = \prod_{j=1}^d \dM_{F_j}(g_j)(u_j)$.
The following commutation result was proven in \cite{Nasri:2022}.

 \begin{lem}\label{lem:commut} For any $f=f_1\otimes \cdots \otimes f_d$, such that $f_{j}(1)=1$, and for any $A\in \cN_d$, one has
$$
\cM_A\circ \dM_{\bF}(f) = \dM_{\bF} \circ \cM_A(f).
$$
\end{lem}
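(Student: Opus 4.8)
The plan is to reduce the identity to one-dimensional computations by exploiting that \emph{both} operators respect the tensor-product structure of $f = f_1\otimes\cdots\otimes f_d$. Recall the factorization properties stated just above: for a product function $g = g_1\otimes\cdots\otimes g_d$ one has $\dM_{\bF}(g) = \bigotimes_{j=1}^d \dM_{F_j}(g_j)$, and the Möbius transform of such a product, when $g_j(1)=1$ for every $j$, is $\cM_A(g)(\bu) = \prod_{j\in A}\{g_j(u_j)-u_j\}$. Two one-dimensional facts will carry the whole argument: first, that $\dM_{F_j}$ reproduces constants and the identity, namely $\dM_{F_j}(\mathbf 1)=\mathbf 1$ and $\dM_{F_j}(D)=D$ (here $D(u)=u$); second, that $\dM_{F_j}(f_j)(1)=f_j(1)=1$, so the interpolated factors still satisfy the normalization needed to apply the Möbius factorization.

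First I would compute the left-hand side. Applying the product rule for $\dM_{\bF}$ gives $\dM_{\bF}(f) = \bigotimes_{j=1}^d \dM_{F_j}(f_j)$, a product function whose $j$th factor $g_j := \dM_{F_j}(f_j)$ satisfies $g_j(1)=1$. Hence the Möbius factorization applies and yields
\[
\cM_A\circ\dM_{\bF}(f)(\bu) = \prod_{j\in A}\bigl\{\dM_{F_j}(f_j)(u_j) - u_j\bigr\}.
\]

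Next I would compute the right-hand side. The Möbius factorization gives $\cM_A(f) = \bigotimes_{j=1}^d h_j$, where $h_j = f_j - D$ for $j\in A$ and $h_j \equiv \mathbf 1$ for $j\notin A$. Applying the product rule for $\dM_{\bF}$ together with the linearity of each $\dM_{F_j}$,
\[
\dM_{\bF}\circ\cM_A(f)(\bu) = \prod_{j\notin A}\dM_{F_j}(\mathbf 1)(u_j)\ \prod_{j\in A}\bigl\{\dM_{F_j}(f_j)(u_j) - \dM_{F_j}(D)(u_j)\bigr\}.
\]
Invoking $\dM_{F_j}(\mathbf 1)=\mathbf 1$ and $\dM_{F_j}(D)=D$, the first product collapses to $1$ and each factor of the second becomes $\dM_{F_j}(f_j)(u_j)-u_j$, which is exactly the expression obtained on the left. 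This establishes the equality.

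The substantive step, and the only place where the explicit form of the interpolation enters, is the fixed-point identity $\dM_{F_j}(D)=D$; this is the part I expect to require genuine (if short) work. To verify it I would evaluate the one-dimensional interpolant at $u$, writing $x = F_j^{-1}(u)$, $a = F_j(x-)$, $b = F_j(x)$: at a continuity point $\lambda_{F_j}(u)=1$ and $b=u$, while at an atom $\lambda_{F_j}(u) = (u-a)/(b-a)$, so that in both cases $\lambda_{F_j}(u)\,b + \{1-\lambda_{F_j}(u)\}\,a = u$. The companion facts $\dM_{F_j}(\mathbf 1)=\mathbf 1$ (immediate from $\lambda+(1-\lambda)=1$) and $\dM_{F_j}(f_j)(1)=f_j(1)$ follow from the same evaluation at the upper endpoint; these boundary computations, and the routine check that the argument is unaffected at $u\in\{0,1\}$, are the only calculations I would then carry out in full.
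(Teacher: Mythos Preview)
Your argument is correct. The paper does not actually supply a proof of this lemma: it merely states that the commutation result ``was proven in \cite{Nasri:2022}'' and moves on, so there is no in-paper proof to compare against. Your route---exploiting the tensor factorizations of both $\dM_{\bF}$ and $\cM_A$ and reducing everything to the one-dimensional identities $\dM_{F_j}(\mathbf 1)=\mathbf 1$ and $\dM_{F_j}(D)=D$---is the natural one and almost certainly matches what the cited reference does. The verification of $\dM_{F_j}(D)=D$ via $\lambda_{F_j}(u)\,b+\{1-\lambda_{F_j}(u)\}\,a=u$ (with $a=F_j(x-)$, $b=F_j(x)$, $x=F_j^{-1}(u)$) is exactly the right computation, and the normalization $\dM_{F_j}(f_j)(1)=f_j(1)=1$ needed to reapply the M\"obius factorization on the left-hand side is indeed a straightforward boundary evaluation.
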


The next result is fundamental for the computations of the proposed statistics.
 \begin{prop}
    \label{prop:DG}
    For any cdf $G$ with mean $\mu$ and variance $\sigma^2$, we have
\begin{equation}\label{eq:DG0}
 \int_{-\infty}^\infty \left[ J_F\{x,G(y)\}-G(y)\right]dy    = \mu - \cG_F(x),
\end{equation}
where $\cG_F(x)  = G^{-1}\{ F(x)\} $, if $F$ is continuous at $x$,  and
   $\cG_F(x)  = \dfrac{\cL_{G}\{F(x)\} -\cL_{G}\{F(x-)\} }{\Delta_F(x)}$,
if $F$ is not continuous at $x$, with $\disp \cL_G(u) = \int_0^u  G^{-1}(v)dv$.
\end{prop}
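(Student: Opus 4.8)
The plan is to reduce the whole statement to one ``base identity'' for indicator integrands and then obtain the jump case from it by an averaging argument that exploits the explicit shape of $D$.

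First I would prove that, for every $c$ for which $G^{-1}(c)$ is finite,
\[
\int_{-\infty}^\infty\left[\I\{c\le G(y)\}-G(y)\right]dy = \mu-G^{-1}(c).
\]
The Galois relation $G^{-1}(c)\le y\iff c\le G(y)$ replaces the indicator by $\I\{G^{-1}(c)\le y\}$, so with $q=G^{-1}(c)$ the integral becomes $\int_q^\infty\{1-G(y)\}dy-\int_{-\infty}^q G(y)\,dy$. Viewing this as a function $h(q)$, one has $h'(q)=-\{1-G(q)\}-G(q)=-1$, while $h(0)=\int_0^\infty\{1-G(y)\}dy-\int_{-\infty}^0 G(y)\,dy=\mu$ by the usual mean formula (finite because $G$ has a finite first moment); hence $h(q)=\mu-q$.

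When $F$ is continuous at $x$, we have $\Delta_F(x)=0$ and $J_F\{x,G(y)\}=\I\{F(x)\le G(y)\}$, so the base identity with $c=F(x)$ immediately gives $\mu-G^{-1}\{F(x)\}=\mu-\cG_F(x)$. The core is the jump case $\Delta_F(x)>0$; writing $a=F(x-)$ and $b=F(x)$ I would use $D(s)=\int_0^1\I\{v\le s\}\,dv$ and the substitution $w=a+v(b-a)$ to obtain
\[
J_F\{x,G(y)\}=D\!\left(\frac{G(y)-a}{b-a}\right)=\frac{1}{b-a}\int_a^b \I\{w\le G(y)\}\,dw.
\]
Since $G(y)=\frac{1}{b-a}\int_a^b G(y)\,dw$ trivially, subtracting and interchanging the $w$- and $y$-integrals turns the left-hand side of the proposition into $\frac{1}{b-a}\int_a^b\{\mu-G^{-1}(w)\}\,dw$ via the base identity, and recognizing $\int_a^b G^{-1}(w)\,dw=\cL_G(b)-\cL_G(a)$ yields
\[
\mu-\frac{\cL_G\{F(x)\}-\cL_G\{F(x-)\}}{\Delta_F(x)}=\mu-\cG_F(x).
\]

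The step requiring the most care is the Fubini interchange. I would justify it by Tonelli, bounding the double integral of the absolute value: for $Y\sim G$,
\[
\int_a^b\!\int_{\dR}\bigl|\I\{w\le G(y)\}-G(y)\bigr|\,dy\,dw=\int_a^b E|Y-G^{-1}(w)|\,dw\le (b-a)\,E|Y|+\int_0^1|G^{-1}(w)|\,dw,
\]
which is finite since $\int_0^1|G^{-1}(w)|\,dw=E|Y|<\infty$. The only remaining delicate point is that the base identity is used for almost every $w\in(a,b)$, so the measure-zero set where $G^{-1}$ may be infinite (e.g.\ at $w=0$) is harmless, and the generalized inverse needs only to be handled up to the measure-zero sets where $G$ jumps.
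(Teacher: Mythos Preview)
Your proof is correct. Both arguments rest on the same ``base identity'' --- your $h(q)=\mu-q$ is exactly the paper's equation $\mu = c + \int_c^\infty \bar G(y)\,dy - \int_{-\infty}^c G(y)\,dy$ --- and the continuous case is handled identically. The real divergence is in the jump case. The paper splits the $y$-integral at $y_0=G^{-1}(a)$ and $y_1=G^{-1}(b)$ according to where $D\{(G(y)-a)/(b-a)\}$ is $0$, linear, or $1$, applies the base identity to collapse two of the pieces, and then spends most of the work computing $\int_{y_0}^{y_1}\{G(y)-a\}\,dy$ directly via an expectation argument to get $y_1(b-a)-\cL_G(b)+\cL_G(a)$. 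Your route is cleaner: writing $D(s)=\int_0^1\I\{v\le s\}\,dv$ and changing variables expresses $J_F\{x,G(y)\}$ as an average of indicators, and Fubini reduces the jump case to an average of instances of the base identity, with the $\cL_G$ terms falling out immediately from $\int_a^b G^{-1}(w)\,dw$. What your approach buys is conceptual economy and the avoidance of the somewhat delicate bookkeeping around $y_0,y_1$ and the quantile-function edge cases; the price is the Fubini justification, which you supply correctly via $\int_{\dR}|\I\{w\le G(y)\}-G(y)|\,dy = E|Y-G^{-1}(w)|$. One cosmetic remark: your $h'(q)=-1$ step is better phrased as $h(q_2)-h(q_1)=-\int_{q_1}^{q_2}\{1-G\}-\int_{q_1}^{q_2}G=-(q_2-q_1)$ for any $q_1<q_2$, which sidesteps differentiability entirely.
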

\begin{proof}
First, since $0\le J_F\le 1$, $\int J_F(x,u)dF(x)=u$, and $Y\sim G$ is integrable, it follows that
$\disp
\int \left[ \int_{-\infty}^\infty \left[ J_F\{x,G(y)\}-G(y)\right]dy \right]dF(x)  = 0$.
Next, for any $c\in\dR$,
$\disp E\left[ Y\I\{Y>c\}\right] = \int_c^\infty \bar G(y)dy -\max(0,-c)\bar G(c)$
and
$\disp
E\left[ Y\I\{Y\le c\}\right] = c -\int_{-\infty}^c   G(y)dy + \max(0,-c)\bar G(c)$.
As  a result,
\begin{equation}\label{eq:murep}
\mu = c+ \int_c^\infty \bar G(y)dy - \int_{-\infty}^c   G(y)dy.
\end{equation}
Set $a=F(x-)$ and $b=F(x)$. Further set $\bar G(y)=1-G(y)$, $y_0=G^{-1}(a)$ and
$y_1=G^{-1}(b)$.
Now, suppose that $\Delta_F(x)=b-a=0$.
Then, according to \eqref{eq:murep}
$$
-I = -\int_{y_1}^\infty\bar G(y)dy + \int_{-\infty}^{y_1} G(y)dy =  y_1-\mu = \cG_F(x)-\mu.
$$
Suppose now that $f(x) = b-a>0$.
Then,
\begin{eqnarray*}
I &=&   \int_{-\infty}^\infty \left[ D\left\{\frac{G(y)-F(x-)}{f(x)}\right\}-G(y)\right]dy  =
- \int_{-\infty}^{y_0}    G(y)dy  +\int_{y_0}^{y_1} \left[ \left\{\frac{G(y)-a}{b-a}\right\}-G(y)\right]dy\\
&& \qquad +  \int_{y_1}^{\infty} \bar G(y)dy \\
&=&  \int_{y_1}^{\infty} \bar G(y)dy - \int_{-\infty}^{y_1} G(y)dy  +\int_{y_0}^{y_1} \left\{\frac{G(y)-a}{b-a}\right\}dy
= \mu -y_1 + \int_{y_0}^{y_1} \left\{\frac{G(y)-a}{b-a}\right\}dy,
\end{eqnarray*}
using \eqref{eq:murep}. Finally,
\begin{multline*}
\int_{y_0}^{y_1} \{G(y)-a\}dy =  (y_1-y_0)\{G(y_0)-a\} + E\left[  \int_{y_0}^{y_1} \I\{y_0 < Y \le y\} dy \right] \\
= (y_1-y_0)\{G(y_0)-a\} +
E\left[  (y_1- Y) \I\{y_0 < Y \le y_1\} \right] \\
= (y_1-y_0)\{G(y_0)-a\} + y_1\left\{G(y_1)-G(y_0)\right\} -\int_a^b G^{-1}(v)dv  +  \int_a^b G^{-1}(v)dv -\int_{G(y_0)}^{G(y_1)} G^{-1}(v)dv \\
=    (y_1-y_0)\{G(y_0)-a\} + y_1\left\{G(y_1)-G(y_0)\right\}  - \cL_G(b)+\cL_G(a)  +\int_a^{G(y_0)} G^{-1}(v)dv -\int_b^{G(y_1)} G^{-1}(v)dv \\
= y_1(b-a) - \cL_G(b)+\cL_G(a).
\end{multline*}
As a result, $- I = \dfrac{\cL_G(b)-\cL_G(a)}{b-a}  -\mu = \cG_F(x)-\mu$.
\end{proof}

The next two results are used to determine the limiting value of the statistics $\gamma_{\bK,A}\left(\widehat C_n^{\maltese}\right)$ and
$\gamma_{\bK,A}\left(\widehat C_n^{\maltese,s}\right)$ where the null hypothesis of independence or randomness is not satisfied. First, for $ \bu\in [0,1]^d$, set
$\disp
C_A^\maltese(\bu) = \cM_A\left(C^\maltese\right)(\bu) = E\left[\prod_{j\in A}\left\{J_{F_j}(X_j,u_j)-u_j\right\}\right]$,
and
$\disp
C_A^{\maltese,s}(\bu) =  \cM_A\left(C^{\maltese,s}\right)(\bu) = E\left[\prod_{j\in A}\left\{J_{F}(Y_{t+1-j},u_j)-u_j\right\}\right]$.

\begin{lem}\label{lem:H1}
$\disp \sup_{\bu\in [0,1]^d}\left|\sni\dG_{n,A}^\maltese(\bu)- C_A^\maltese(\bu)\right|\to 0$ a.s., for any $A\in \cN_d$. If the series $U_t$ is stationary and ergodic, and $Y_t=F^{-1}(U_t)$, then $\disp \sup_{\bu\in [0,1]^d}\left|\sni\dG_{n,A}^{\maltese,s}(\bu)- C_A^{\maltese,s}(\bu)\right|\to 0$ a.s., for any $A\in \cS_d$.
\end{lem}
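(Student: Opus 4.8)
The goal is to prove a uniform strong law of large numbers for the Möbius-transformed empirical multilinear processes, namely that $\sni\dG_{n,A}^\maltese$ converges almost surely and uniformly over $[0,1]^d$ to its population counterpart $C_A^\maltese$, with the analogous serial statement under stationarity and ergodicity. The plan is to exploit the explicit sum representations in \eqref{eq:GnAnonserial} and \eqref{eq:GnAserial}, which write $\sni\dG_{n,A}^\maltese(\bu)$ as an empirical average of the product $\prod_{j\in A}\left[D\left\{(u_j-F_{nj}(X_{ij}-))/\Delta_{F_{nj}}(X_{ij})\right\}-u_j\right]$ of uniformly bounded terms. The first step is to replace the empirical margins $F_{nj}$ by the true margins $F_j$ inside the integrands, showing this replacement error vanishes uniformly in $\bu$; this is controlled by the Glivenko--Cantelli theorem, since $\sup_x|F_{nj}(x)-F_j(x)|\to 0$ a.s.\ (and similarly for the left limits $F_{nj}(x-)$), together with the Lipschitz/monotonicity structure of the building block $J_{F_j}(x_j,u_j)$ in $u_j$ away from atoms.

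After this reduction, the second step is to apply a standard uniform (Glivenko--Cantelli-type) strong law to the averages $n^{-1}\sum_i \prod_{j\in A}\left[J_{F_j}(X_{ij},u_j)-u_j\right]$ with the true margins fixed. Because the summands are bounded and indexed by $\bu\in[0,1]^d$, and because each factor $J_{F_j}(x_j,u_j)$ is monotone in $u_j$, the relevant function class is a finite product of bounded monotone (hence VC or bracketing-controlled) families; the uniform SLLN then gives a.s.\ convergence of the average to $E\left[\prod_{j\in A}\{J_{F_j}(X_j,u_j)-u_j\}\right]=C_A^\maltese(\bu)$, uniformly in $\bu$. In the non-serial case the $\bX_i$ are iid, so the classical uniform SLLN applies directly; in the serial case one instead invokes the ergodic theorem for the stationary ergodic sequence $(Y_t)$, and upgrades pointwise ergodic convergence to uniform convergence over the compact index set $[0,1]^d$ by a standard monotonicity-plus-finite-grid argument.

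The main obstacle I expect is the interaction between the estimated margins and the discontinuities (atoms) of $F_j$ in the first step. At an atom $x$ with $\Delta_{F_j}(x)>0$, the term $D\{(u_j-F_{nj}(x-))/\Delta_{F_{nj}}(x)\}$ involves a ratio whose numerator and denominator both fluctuate with $n$, and $\Delta_{F_{nj}}(x)$ can be small, so one cannot naively bound the difference by the margin error alone. The resolution is that on the event where $u_j$ falls strictly between $F_j(x-)$ and $F_j(x)$ the map is linear with slope $1/\Delta_{F_j}(x)$, and Glivenko--Cantelli forces both $F_{nj}(x-)\to F_j(x-)$ and $\Delta_{F_{nj}}(x)\to\Delta_{F_j}(x)>0$ a.s.; since $F_j$ has at most countably many atoms and the average gives each atom weight proportional to its probability, the contribution of atoms where $\Delta_{F_{nj}}$ is anomalously small is negligible and can be handled by truncating to finitely many large atoms and bounding the tail by $\sum_{\Delta_{F_j}(x)<\eta}\Delta_{F_j}(x)$, which is small. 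Once this uniform replacement is established, the ergodic/Glivenko--Cantelli step in the second paragraph completes the argument, and the serial case follows by the same reasoning with the circular construction and the ergodic theorem in place of the iid SLLN.
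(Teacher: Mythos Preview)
Your approach is correct and follows the same two-step structure as the paper's proof: first replace the empirical margins $F_{nj}$ by the true margins $F_j$ inside the product $\prod_{j\in A}\{J_{F_{nj}}(X_{ij},u_j)-u_j\}$ uniformly in $\bu$ (via the multinomial expansion and boundedness of the factors), then apply a uniform law of large numbers---iid in the non-serial case, the ergodic theorem in the serial case---to the averages with the true margins fixed. The only difference is that for the first step the paper invokes Lemma~A.4 of \cite{Genest/Neslehova/Remillard:2017}, which gives $\sup_{u\in[0,1]}|J_{G_n}(Z_i,u)-J_G(Z_i,u)|\to 0$ a.s.\ directly and thus sidesteps your atom-truncation argument, while the second step is simply asserted without further elaboration.
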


\begin{proof}
It follows from Lemma A.4 in \cite{Genest/Neslehova/Remillard:2017} that for any stationary and ergodic sequence $(Z_i)_{i\ge 1}$ with distribution function $G$,
$\disp \sup_{u\in [0,1]}\left|J_{G_n}(Z_i,u)-J_{G}(Z_i,u)\right|\to 0$ a.s., where $\disp G_n(z) = n^{-1}\sum_{i=1}^n \I(Z_i\le z)$. Using the multinomial formula, it then follows that for any $A\in\cN_d$,
$$
\sup_{\bu\in [0,1]^d}\left| \sni \dG_{A,n}^\maltese(\bu) - n^{-1}\sum_{i=1}^n \prod_{j\in A}\left\{ J_{F_j}(X_{ij},u_j)-u_j\right\}  \right| \to 0 \qquad a.s.
$$
To complete the proof in the non-serial case, it suffices to remark that
$$
\sup_{\bu\in [0,1]^d}\left| n^{-1}\sum_{i=1}^n \prod_{j\in A}\left\{ J_{F_j}(X_{ij},u_j)-u_j\right\}  -\cM_A\left(C^\maltese\right)(\bu)\right| \to 0 \qquad a.s.
$$
Finally, in the serial case,  using again the multinomial formula, it follows that for any $A\in\cS_d$,
$$
\sup_{\bu\in [0,1]^d}\left| \sni \dG_{A,n}^{\maltese,s}(\bu) - n^{-1}\sum_{i=1}^n \prod_{j\in A}\left\{ J_{F}(Y_{t+1-j},u_j)-u_j\right\}  \right| \to 0 \qquad a.s.
$$
To complete the proof, it suffices to remark that from the ergodic theorem,
$$
\sup_{\bu\in [0,1]^d}\left| n^{-1}\sum_{i=1}^n \prod_{j\in A}\left\{ J_{F}(Y_{t+1-j},u_j)-u_j\right\}  -\cM_A\left(C^{\maltese,s}\right)(\bu)\right| \to 0 \qquad a.s.
$$
\end{proof}

\begin{lem}\label{lem:moment}
Set $X=F^{-1}(U)$, with $U\sim\unif$, set $Z = G^{-1}(U)$, and suppose that $E\left[\left|Z\right|^p \right]<\infty$ for a given $p\ge 1$. Then
$\disp E\left[ \left| \cG_F(X) \right|^p \right]\le E\left[\left|Z\right|^p \right]$. In addition, if $U_1,\ldots, U_n$ are iid and $\unif$ and
$X_i=F^{-1}(U_i)$, then $\disp
\frac{1}{n}\sum_{i=1}^n \left|\cG_{F_n}(X_i)-\cG_F(X_i)\right|^p $
converges in probability to $0$, where $F_n(x) = \frac{1}{n}\sum_{i=1}^n \I(X_i\le x)$.
 Furthermore, if $(U_t)_{t\ge 1}$ is a stationary and ergodic sequence uniformly distributed on  $(0,1)$ and
$Y_t = F^{-1}(U_t)$, then
$\disp \frac{1}{n}\sum_{t=1}^n \left|\cG_{F_n}(Y_t)-\cG_F(Y_t)\right|^p$
converges in probability to $0$, if $F_n(y) = \frac{1}{n}\sum_{t=1}^n \I(Y_t\le y)$.
\end{lem}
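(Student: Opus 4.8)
The plan is to read $\cG_F$ as a conditional expectation. With $U\sim\unif$ and $X=F^{-1}(U)$, the distributional-transform construction of Section~\ref{sec:processes} shows that, conditionally on $X$, the latent $U$ is uniform on $(F(X-),F(X)]$; combined with the formula for $\cG_F$ in Proposition~\ref{prop:DG} this gives $\cG_F(X)=E\{G^{-1}(U)\mid X\}=E(Z\mid X)$. The first assertion then follows at once from the conditional Jensen inequality for the convex map $t\mapsto|t|^p$, $p\ge1$: $|\cG_F(X)|^p=|E(Z\mid X)|^p\le E(|Z|^p\mid X)$, and taking expectations yields $E|\cG_F(X)|^p\le E|Z|^p$.

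For the two convergence statements I would introduce the averaging operator $T_G h(x)=\int_0^1 h\{G(x-)+s\Delta_G(x)\}\,ds$, so that $\cG_F=T_F(G^{-1})$ and $\cG_{F_n}=T_{F_n}(G^{-1})$, and isolate the single deterministic inequality that controls the irregular part of $G^{-1}$. Grouping the sample by its distinct values and applying Jensen on each rank-interval shows that, for every $h\in L^p(0,1)$ and every data configuration,
\begin{equation*}
\frac1n\sum_{i=1}^n\bigl|T_{F_n}h(X_i)\bigr|^p\le\int_0^1|h(v)|^p\,dv,
\end{equation*}
because the rank-intervals $\bigl(F_n(X_i-),F_n(X_i)\bigr]$ partition $(0,1)$. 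Since $E|Z|^p<\infty$ means $G^{-1}\in L^p(0,1)$, the strategy is to fix $\delta>0$, pick $\tilde g\in C_c(0,1)$ with $\|G^{-1}-\tilde g\|_{L^p}<\delta$, and bound the empirical $\ell^p$-distance between $\cG_{F_n}$ and $\cG_F$ by the triangle inequality through $T_{F_n}\tilde g$ and $T_F\tilde g$.

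The three resulting pieces are then disposed of in turn. The contribution of $T_{F_n}(G^{-1}-\tilde g)$ is at most $\|G^{-1}-\tilde g\|_{L^p}<\delta$ for every $n$, by the displayed inequality with $h=G^{-1}-\tilde g$. The contribution of $T_F(G^{-1}-\tilde g)$ converges almost surely to $\|G^{-1}-\tilde g\|_{L^p}<\delta$ by the strong law of large numbers, since $\tilde U=F(X-)+V\Delta_F(X)\sim\unif$ makes $E\int_0^1|(G^{-1}-\tilde g)\{F(X-)+s\Delta_F(X)\}|^p\,ds=\|G^{-1}-\tilde g\|_{L^p}^p$. Finally, for the fixed compactly supported continuous $\tilde g$, uniform continuity together with Glivenko--Cantelli ($\sup_x|F_n(x)-F(x)|\to0$ a.s., and likewise for left limits) gives $\sup_i|T_{F_n}\tilde g(X_i)-T_F\tilde g(X_i)|\to0$ a.s. Combining the three bounds, $\limsup_n\bigl(\frac1n\sum_i|\cG_{F_n}(X_i)-\cG_F(X_i)|^p\bigr)^{1/p}\le 2\delta$ almost surely; letting $\delta\downarrow0$ gives convergence to $0$, a fortiori in probability.

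In the serial case the same decomposition applies word for word: the deterministic rank-interval inequality is unaffected, the law of large numbers is replaced by Birkhoff's ergodic theorem applied to the stationary ergodic sequence $\int_0^1|(G^{-1}-\tilde g)\{F(Y_t-)+s\Delta_F(Y_t)\}|^p\,ds$, and the Glivenko--Cantelli step uses its ergodic analogue. The crux of the argument, and the only real obstacle, is to accommodate a quantile $G^{-1}$ that may be unbounded and discontinuous under the sole hypothesis $E|Z|^p<\infty$; the rank-interval Jensen inequality is precisely what turns $L^p(0,1)$-closeness into closeness of the empirical $\ell^p$-averages uniformly in $n$, so that neither boundedness nor continuity of $G^{-1}$ is ever needed.
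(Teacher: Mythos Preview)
Your proof is correct and takes a genuinely different route from the paper's. For the moment bound, both arguments are Jensen at heart, but you package it as $\cG_F(X)=E\{G^{-1}(U)\mid X\}$ and apply conditional Jensen in one line, whereas the paper writes out the atom/non-atom decomposition \eqref{eq:cGFallap}--\eqref{eq:cGFallap2} explicitly. These are the same computation viewed from two angles.

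For the convergence statement the approaches diverge more substantially. The paper uses the representation of $\cG_F$ from Proposition~\ref{prop:DG} as an integral over $\dR$, truncates that integral at $\pm M$, invokes Lemma~\ref{lem:H1} (uniform convergence of $J_{F_n}$ to $J_F$) on $[-M,M]$, and then controls the tails via the explicit identities \eqref{eq:intMub}--\eqref{eq:intMlb} and direct estimates involving $\int_{G(M)}^1|G^{-1}(u)-M|^p\,du$. Your argument instead approximates $G^{-1}$ directly in $L^p(0,1)$ by a compactly supported continuous $\tilde g$, and the key new ingredient is the deterministic rank-interval inequality $n^{-1}\sum_i|T_{F_n}h(X_i)|^p\le\|h\|_{L^p}^p$, which gives uniform-in-$n$ control of the approximation error with no appeal to the specific shape of $G^{-1}$; the remaining ``smooth'' piece is handled by Glivenko--Cantelli plus uniform continuity of $\tilde g$ rather than by Lemma~\ref{lem:H1}. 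Your route is more abstract and arguably cleaner---it never unpacks $G^{-1}$ as a quantile function and avoids the somewhat delicate tail bookkeeping---while the paper's route is more hands-on and ties in directly with the $J_F$ machinery used elsewhere in the article. One small wording point: the $T_F(G^{-1}-\tilde g)$ term converges a.s.\ to $E|T_F(G^{-1}-\tilde g)(X)|^p$, which is $\le\|G^{-1}-\tilde g\|_{L^p}^p$ rather than equal to it; your justification (bounding by the inner-$|\cdot|^p$ version and using $\tilde U\sim\unif$) already delivers exactly this inequality, so the argument is unaffected.
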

\begin{proof}
 If $\cA$ is the set of atoms of $F$, $f(x)=F(x)-F(x-)$, $x\in\cA$,  and $\cB = \bigcup_{x\in\cA}(F(x-),F(x)]$, then
\begin{equation}\label{eq:cGFallap}
E\left[ \left| \cG_F(X) \right|^p \right] = \sum_{x\in \cA} f(x) \left| \cG_F(x) \right|^p
+E\left[\I(U\not \in \cB)\left| G^{-1}(U) \right|^p \right].
\end{equation}
Next,
\begin{multline}\label{eq:cGFallap2}
\sum_{x\in \cA} f(x) \left| \cG_F(x) \right|^p = \sum_{x\in\cA}f(x)\left|\frac{1}{f(x)}\int_{F(x-)}^{F(x)} G^{-1}(u)du \right|^p \\
\le
\sum_{x\in\cA} \int_{F(x-)}^{F(x)} \left|G^{-1}(u)\right|^p du = E\left[\I(U\in \cB)\left| G^{-1}(U) \right|^p \right].
\end{multline}
Combining \eqref{eq:cGFallap}--\eqref{eq:cGFallap2}, one gets that
$\disp E\left[ \left| \cG_F(X) \right|^p \right]\le E\left[\left|Z\right|^p \right]$.
Next, for any $M>0$, set $G_M = \disp\int_{-\infty}^{-M} G(x)dx$ and $\bar G_M = \disp\int_M^{\infty} \{1-G(x)\}dx$.
By assumption, $G_M$ and $\bar G_M$ converge to $0$ as $M\to\infty$. Also,
\begin{equation}\label{eq:intMub}
\bar L(u,M)= \int_{M}^\infty \left[\I\left\{u \le G(x)\right\}-G(x)\right] dx= \bar G_M -\left\{G^{-1}(u)-M\right\}^+,
\end{equation}
\begin{equation}\label{eq:intMlb}
L(u,M) = \int_{-\infty}^{-M} \left[\I\left\{u \le G(x)\right\}-G(x)\right] dx= -G_M +\left\{-M-G^{-1}(u)\right\}^+.
\end{equation}
Suppose first that $U_1,\ldots, U_n$ are iid.
As a result, if $\disp \int_0^1 \left|K^{-1}(u)\right|^p du $ is finite, $p>1$,  then for any $M>0$, it follows from Lemma \ref{lem:H1} that
$\disp \frac{1}{n}\sum_{i=1}^n \left|\int_{-M}^M \left[J_{F_n}\{X_{i},G(x)\}- J_F\{X_i,G(x)\} \right]dx \right|^p$
converges in probability to $0$ as $n\to\infty$. Next, using \eqref{eq:intMub}, one gets
\begin{multline*}
\frac{1}{n}\sum_{i=1}^n \left| \int_{M}^\infty \left[ J_{F_{n}}\{X_{i},G(x)\}- J_F \{X_{i},G(x)\} \right]dx \right|^p \\
\le \frac{1}{n} \sum_{i=1}^n  \left| \int_0^1 \left[ G^{-1}\left\{ F_n(X_i-)+s\Delta F_n(X_i) \right\} -M\right]^+ ds \right|^p\\
 + \frac{1}{n}\sum_{i=1}^n \left|\int_0^1 \left[ G^{-1}\left\{F(X_i-)+s\Delta F(X_i)\right\} -M\right]^+ ds \right|^p.
\end{multline*}
Next,
\begin{multline*}
\frac{1}{n}\sum_{i=1}^n
\left|\int_0^1 \left[ G^{-1}\left\{ F_n(X_i-)+s\Delta F_n(X_i) \right\} -M\right]^+ ds \right|^p  \\
\le
\sum_{x\in\cA} \int_{F_n(x-)}^{F_n(x)} \left\{ G^{-1}(u) -M\right\}^p \I\{u>G(M)\}  du \\
 + \sum_{i=1}^n \I(U_i\not\in \cB)
\int_{B_n(U_i)-1/n}^{B_n + s/n} \left\{ G^{-1}(u) -M\right\}^p  \I\{u>G(M)\} du  \\
\le
\int_{G(M)}^\infty  \left|G^{-1}(u) -M \right|^p \I\{u>G(M)\} du  +  \sum_{i=1}^n \I(U_i\not\in \cB)
\int_{B_n(U_i)-1/n}^{B_n + s/n} \left\{ G^{-1}(u) -M\right\}^p  \I\{u>G(M)\} du.
\end{multline*}
Now, $\disp \int_{G(M)}^\infty  \left|G^{-1}(u) -M \right|^p \I\{u>G(M)\} du $ can be made arbitrarily small if $M$ is large enough. Also,
$$
E \left[ \sum_{i=1}^n \I(U_i\not\in \cB)
\int_{B_n(U_i)-1/n}^{B_n(U_i)} \left| G^{-1}(u) -M\right|^p \I\{u>G(M)\}  du \right] \le  \int_{G(M)}^1 \left| G^{-1}(u) -M\right|^p du,
$$
which can be made arbitrarily small if $M$ is large enough. Next,
$$
\frac{1}{n}\sum_{i=1}^n \left|\int_0^1 \left[ G^{-1}\left\{F(X_i-)+s\Delta F(X_i)\right\} -M\right]^+ ds \right|^p
$$
converges almost surely to
$\disp
E\left[ \left| \int_0^1 \left[ G^{-1}\left\{ F(X_1-)+s\Delta F(X_1)\right\} -M\right]^+ ds \right|^p \right] \le \int_0^1 \left| G^{-1}(u)\right|^p du$,
since $F(X_1-)+V\Delta F(X_1)\sim \unif(0,1)$ if $V\sim \unif$  and $V$ is independent of $X_1$.
An analogous result holds for $\disp \frac{1}{n} \sum_{i=1}^n \left| \int_{-\infty}^{-M} \left[ J_{F_{n}}\{X_{i},G(x)\}-J_{F}\{X_{i},G(x)\} \right]dx \right|^p$. As a result,
$\disp
\frac{1}{n}\sum_{i=1}^n \left|\cG_{F_n}(X_i)-\cG_F(X_i)\right|^p$
converges in probability to $0$.
To complete the proof, note that the previous inequalities also hold for a stationary ergodic sequence $(U_t)_{t\ge 1}$.
\end{proof}

\section{Proofs}

    \subsection{Proof of Lemma \ref{lem:varest}}\label{app:pf-lem}
Without loss of generality, drop the subscript $j$. Then, by Proposition \ref{prop:DG},
\begin{eqnarray*}
    s_n^2 &=& n^{-1}\sum_{i=1}^n
    \left[ \dfrac{\cL\{F_n(X_i)\} -\cL\{F_n(X_i-)\} }{\Delta_{F_n}(X_i)}-\mu \right]^2\\
   &=&  n^{-1}\sum_{i=1}^n \left[\int_{\dR} \left[\int_0^1 \I\left\{ F_{n}(X_{i}-)+ s \Delta_{F_{n}}(X_{i})\le  K(x)\right\}ds -K(x)\right]dx \right]^2.
\end{eqnarray*}
Next, let $M>0$ be given and choose $\delta\in (0,1)$ so that $K^{-1}(1-\delta)> M$ and $K^{-1}(\delta)< -M$. To prove the result, it suffices to show that if $M$ is large enough, and $\delta$ is small enough,
$$
s_{n,1,M}^2 = n^{-1}\sum_{i=1}^n \left[ \int_{-M}^M  \left[\int_0^1 \I\left\{ F_{n}(X_{i}-)+ s \Delta_{F_{n}}(X_{i})\le  K(x)\right\}ds -K(x)\right]dx \right]^2
$$
can be arbitrarily close to $s^2$, while
$$
s_{n,2,M}^2 = n^{-1}\sum_{i=1}^n \left[ \int_{M}^\infty  \left[\int_0^1 \I\left\{ F_{n}(X_{i}-)+ s \Delta_{F_{n}}(X_{i})\le  K(x)\right\}ds -K(x)\right]dx \right]^2
$$
and
$$
s_{n,3,M}^2 = n^{-1}\sum_{i=1}^n \left[ \int_{-\infty}^{-M}  \left[\int_0^1 \I\left\{ F_{n}(X_{i}-)+ s \Delta_{F_{n}}(X_{i})\le  K(x)\right\}ds -K(x)\right]dx \right]^2
$$
can be made arbitrarily small. First,  as $n\to\infty$, $s_{n,1,M}^2$ converges in probability to
$$
s_{1,M}^2 = E \left[ \int_{-M}^M  \left[\int_0^1 \I\left\{ F(X-)+ s \Delta_{F}(X)\le  K(x)\right\}ds -K(x)\right]dx \right]^2.
$$
Using similar arguments as in \cite{Genest/Remillard:2004}, $s_{1,M}^2\to s^2$ as $M\to\infty$.  Next,
$ s_{n,2,M}^2 =s_{n,2a,M}^2 +s_{n,2b,M}^2+s_{n,2c,M}^2$, where
\begin{multline*}
s_{n,2a,M}^2 =
 n^{-1}\sum_{i=1}^n I\{F_n(X_i-)\le 1-\delta,F_n(X_i)\ge \delta\}\\
 \times \left[ \int_{M}^\infty  \left[\int_0^1 \I\left\{ F_{n}(X_{i}-)+ s \Delta_{F_{n}}(X_{i})\le  K(x)\right\}ds -K(x)\right]dx \right]^2,\\
\end{multline*}
 \begin{multline*}
 s_{n,2b,M}^2 =  n^{-1}\sum_{i=1}^n I\{F_n(X_i-)>1-\delta\}\\
  \times \left[ \int_{M}^\infty  \left[\int_0^1 \I\left\{ F_{n}(X_{i}-)+ s \Delta_{F_{n}}(X_{i})\le  K(x)\right\}ds -K(x)\right]dx \right]^2,
  \end{multline*}
 \begin{multline*}
 s_{n,2c,M}^2= n^{-1}\sum_{i=1}^n I\{F_n(X_i)< \delta\}\\
  \times \left[ \int_{M}^\infty  \left[\int_0^1 \I\left\{ F_{n}(X_{i}-)+ s \Delta_{F_{n}}(X_{i})\le  K(x)\right\}ds -K(x)\right]dx \right]^2 .
\end{multline*}
Now $s_{n,2a,M}^2$ converges in probability to $s_{2a,M}^2$, which can be made arbitrarily small by taking $M$ large enough. Next, $s_{n,2c,M}=0$ since $K(\delta)\le -M$. Finally,
$$
s_{n,2b,M}^2 \le \sum_{i=1}^n \I\{F_n(X_i-)> 1-\delta\}\left[\int_M^\infty \{1-K(x)\}dx\right]^2,
$$
which can be made arbitrarily small since $1-K$ is integrable on $(0,\infty)$. The case of $s_{n,3,M}$ is similar.
\qed

\subsection{Proof of Proposition \ref{prop:are}}\label{app:pf-are}
 If $\cA_j$ is the set of atoms of $F_j$, and $\disp \cI_j = \cup_{x\in\cA_j}\left(F_j(x-),F_j(x)\right)$, then for $U\sim \unif$, $X_j=F_{j}^{-1}(U)\sim F_j$, and $\cov\left\{\cK_{j}\circ F_j^{-1}(U), G_j^{-1}(U)\right\}$
\begin{multline*}
   = \sum_{x\in \cA_j}\left\{  \cK_{j,F_j}(x) -\mu_j\right\}  \int_{F_j(x-)}^{F_j(x)} \left\{G_j^{-1}(u)-\tilde\mu_j\right\} du     + \int_{\{u\not \in \cI_j\}}\left\{K_j^{-1}(u)-\mu_j\right\}\left\{G_j^{-1}(u)-\tilde\mu_j\right\} du \\
   = \sum_{x\in\cA_j} \Delta_{F_j}(x) \left\{  \cK_{j,F_j}(x) -\mu_j\right\} \left\{  \cG_{j,F_j}(x) -\tilde\mu_j\right\}  + \int_{\{u\not \in \cI_j\}}\left\{K_j^{-1}(u)-\mu_j\right\}\left\{G_j^{-1}(u)-\tilde\mu_j\right\} du \\
  =\cov\left\{ \cK_{j}\circ F_j^{-1}(U),\cG_{j}\circ F_j^{-1}(U)\right\} =
  \cov\left\{ \cK_{j}(X_j),\cG_{j}(X_j)\right\}.
\end{multline*}
The rest of the proof follows from Remark \ref{rem:varsigma}.
\qed

\section{Supplementary material}\label{app:supp}
%%%%%%%%%%%%%%%%%%%%%%%%%%%%%%%%%%%%%%%%%%%%%%%%%%%%%%%%%%%%%%%%%%%%%%%%%%%%%%%%%%%%%%%%%%%%%%%%

Inn this section, we verify the conditions for contiguity, which are 
 $C_\btheta$ has a continuous density $c_\btheta$  continuously differentiable  with square integrable gradient $\dot c_\btheta$ in a neighbourhood of $\btheta_0$, with $\dot c = \left. \nabla_\btheta c_\btheta (\bu)\right|_{\btheta=\btheta_0}$,  $\bu \in (0,1)^d$,  $\disp \dot C (\bu) = \int_{(0,\bu]} \dot c (\bs) d\bs$, and
\begin{equation}\label{eq:condvdVWnew}
\lim_{n\to \infty}\int_{(0,1)^d}  [ {n}^{1/2} [ \{  c_{\btheta_n} (\bu) \}^{1/2} -1] - \bdelta^\top \dot c
(\bu)/2  ]^2 d\bu = 0.
\end{equation}

Using the mean value theorem, one can see that the following stronger conditions implies \eqref{eq:condvdVWnew}:
\begin{eqnarray}\label{eq:condvdVW2anew}
\lim_{n\to \infty}\int_{(0,1)^d} \sup_{\|\btheta-\btheta_0\|\le n^{-1/2}\|\bdelta\|} \left\|\dot c_\btheta(\bu)- \dot c
(\bu)\right\|^2 d\bu &=& 0,\\
\limsup_{n\to \infty}\int_{(0,1)^d} \sup_{\|\btheta-\btheta_0\|\le n^{-1/2}\|\bdelta\|} \left\|\dot c_\btheta(\bu)\right\|^4 d\bu &<& \infty.
\label{eq:condvdVW2bnew}
\end{eqnarray}
The latter conditions are met for several bivariate copula families, including the Gaussian, Farlie-Gumbel-Morgenstern, Clayton and Frank. However they do not hold for Gumbel's copula since $\dot c$ is not square integrable.  \\

Note that $\dot c$ must be square integrable since if one uses LeCam's third lemma \citep{vanderVaart/Wellner:1996}, which is essential in proving contiguity results, one gets
 $\disp \sum_{i=1}^n \log{c_{\btheta_n}(\bU_i)} = \bdelta^\top \dZ_n - \frac{1}{2}\bdelta^\top \Sigma_n \bdelta+o+P(1)$ converges in law to $\bdelta^\top \dZ - \frac{1}{2}\bdelta^\top \Sigma \bdelta$,
 where $\dZ_n = n^{-1/2}\sum_{i=1}^n \dot c(\bU_i)$, $\Sigma_n = n^{-1}\sum_{i=1}^n \dot c(\bU_i) \dot c(\bU_i)^\top$, $\dZ\sim N(0,\Sigma)$, and $\disp \Sigma = \int_{(0,1)^d} \dot c(\bu) \dot c(\bu)^\top d\bu$.\\

\subsection{Clayton's copula}
In this case, one gets
\begin{eqnarray*}
\dot c_\theta(u,v) &=& -\log(u)-\log(v)+\frac{1}{1+\theta}+ \theta^{-2}\log\left(u^{-\theta}+v^{-\theta}-1\right)\\
&& \qquad +\left(\frac{1+2\theta}{\theta}\right) \frac{ \left\{ u^{-\theta} \log(u)+ v^{-\theta}\log(v)\right\}}{\left(u^{-\theta}+v^{-\theta}-1\right)}.
\end{eqnarray*}
As a result, $\dot c(u,v) = \left\{1+\log(u)\right\}\left\{1+\log(v)\right\}$, and $\dot c_\theta(u,v)-\dot c(u,v) = \theta g(u,v)+o(\theta)$, where
\begin{eqnarray*}
g(u,v) &=& \ln(u)^2+\ln(v)^2+8\ln(u)\ln(v)+\ln(v)^2\ln(u)^2+2\ln(v)+2\ln(u)\\
&& \qquad +3\ln(v)\ln(u)^2+3\ln(u)\ln(v)^2,
\end{eqnarray*}
and $\disp \int_{(0,1)^2}\{g(u,v)\}^4 dudv <\infty$.

\subsection{Farlie-Gumbel-Morgenstern}
For this copula family, $c_\theta(u,v)= 1+\theta(1-2u)(1-2v)$, so $\dot c_\theta(u,v)\equiv (1-2u)(1-2v)$, and \eqref{eq:condvdVW2anew}--\eqref{eq:condvdVW2bnew} are obviously met.
\subsection{Frank's copula}
In this case,  $\dot c(u,v) = \frac{(1-2u)(1-2v)}{2}$ and $\dot c_\theta(u,v)-\dot c(u,v)= \theta p_1(u,v)+o(\theta)$ uniformly over $(0,1)^2$, where $p_1(u,v)$ is a polynomial in $u$ and $v$. As a result, both \eqref{eq:condvdVW2anew}--\eqref{eq:condvdVW2bnew} are satisfied.

\subsection{Gaussian copula}

For the Gaussian copula with parameter $\rho$, the density is square integrable and a smooth function of $\rho$ about $\rho =0$, and conditions \eqref{eq:condvdVW2anew}--\eqref{eq:condvdVW2bnew} are met.

\subsection{Gumbel's copula}
This case is much more difficult. Here, setting $x=-\log(u)$ and $y=-\log(v)$, one gets
$$
\dot c(u,v) = \frac{1}{x+y}-(2-x-y)\log(x+y)-(y-1)\log(y)-(x-1)\log(x),
$$
which is not square integrable.  \\

%\bibliography{refBR}
\bibliographystyle{apalike}

\def\cprime{$'$}

%\section{Simulation results}\label{app:res}

%\input{tablesMoebiusDepMeasures}
%\begin{landscape}
\begin{table}[h!]
\caption{Power of  the proposed tests of serial independence for statistics $L_{n,2,5}$ and $L_{n,5}$ of Spearman's, van der Waerden's, and Savage's coefficients, for the independence copula and the Tent map copula, based on $N=1000$ replications.}\label{tab:ind+tent}
{\footnotesize
\begin{center}
\begin{tabular}{ccccccccccccccc}
\hline
& & \multicolumn{6}{c}{Ind} & \multicolumn{6}{c}{Tent map}\\
$n$  & Margin &  \multicolumn{2}{c}{Spearman} & \multicolumn{2}{c}{van der Waerden} & \multicolumn{2}{c}{Savage} &  \multicolumn{2}{c}{Spearman} & \multicolumn{2}{c}{van der Waerden} & \multicolumn{2}{c}{Savage}\\
\cline{3-14}\\
& & $L_{n,2,5}$ & $L_{n,5,5}$ & $L_{n,2,5}$ &  $L_{n,5,5}$ & $L_{n,2,5}$ & $L_{n,5,5}$ & $L_{n,2,5}$ & $L_{n,5,5}$ & $L_{n,2,5}$ &  $L_{n,5,5}$ & $L_{n,2,5}$ & $L_{n,5,5}$\\
\\
\multirow{7}{*}{100}  & $F_1$     &   3.6 &  6.4 &  3.6   & 6.4  &  3.6   & 6.4  &    84.8 &  77.3 &  84.8  & 77.3 &  84.8 & 77.3 \\
                      & $F_2$     &   5.1 &  5.6 &  4.5   & 7.5  &  3.9   & 6.1  &     9.1 &  17.4 &   6.8  & 45.0 &  61.1 &  62.7  \\
                      & $F_3$     &   6.0 &  5.0 &  5.3   & 7.0  &  5.5   & 7.9  &     9.2 &  15.0 &   5.8  & 43.1 &  68.5 &  67.5  \\
                      & $F_4$     &   5.8 &  4.8 &  5.4   & 6.3  &  4.6   & 7.5  &     7.7 &  16.3 &   4.3  & 50.8 &  68.0 &  65.4  \\
                      & $F_5$     &   5.2 &  5.9 &  5.0   & 6.3  &  3.3   & 5.3  &     7.9 &  16.6 &   5.4  & 56.2 &  59.6 &  56.2  \\
                      & $F_6$     &   4.6 &  5.1 &   4.7  & 7.1  &  3.4   & 5.5  &     8.5 &  16.3 &   6.2  & 54.3 &  59.7 &  61.4  \\
                      & $F_7$     &   5.5 &  4.3 &   5.7  & 5.4  &  5.7   & 4.7  &    52.0 &  32.3 &  57.1  & 34.8 &  16.8 &  12.2   \\
                     \hline
 \multirow{7}{*}{250} & $F_1$     &   5.5 &  6.0 &   5.5  & 6.0  &  5.5   & 6.0  &   100.0 &  99.7 & 100.0  & 99.7 & 100.0 &  99.7  \\
                      & $F_2$     &   4.6 &  4.8 &   5.2  & 5.7  &  4.4   & 6.6  &    10.1 &  28.2 &   8.2  & 68.9 &  97.6 &  98.4   \\
                      & $F_3$     &   4.3 &  5.2 &   4.3  & 5.9  &  6.4   & 7.9  &     9.9 &  22.8 &   9.6  & 60.9 &  97.9 &  97.4   \\
                      & $F_4$     &   4.6 &  5.2 &   4.4  & 6.4  &  4.4   & 6.5  &    10.7 &  29.1 &   7.7  & 64.4 &  97.1 &  97.2   \\
                      & $F_5$     &   6.1 &  5.3 &   5.8  & 6.6  &  4.1   & 9.8  &     8.1 &  23.7 &   6.1  & 72.0 &  98.1 &  98.2   \\
                      & $F_6$     &   3.5 &  4.4 &   3.9  & 5.3  &  4.1   & 8.0  &     7.9 &  24.3 &   7.2  & 71.9 &  98.3 &  98.2   \\
                      & $F_7$     &   4.2 &  5.3 &   3.6  & 5.3  &  5.0   & 5.2  &    90.9 &  78.7 &  98.1  & 92.5 &  36.2 &  24.7   \\
                      \hline
 \multirow{7}{*}{500} & $F_1$     &   5.4 &  7.3 &   5.4  & 7.3  &  5.4   & 7.3  &   100.0 & 100.0 & 100.0  &100.0 & 100.0 & 100.0   \\
                      & $F_2$     &   4.8 &  4.2 &   4.8  & 5.3  &  4.9   & 7.7  &    11.7 &  33.4 &  11.2  & 75.3 & 100.0 & 100.0  \\
                      & $F_3$     &   5.0 &  5.0 &   5.0  & 5.3  &  5.5   & 6.8  &    12.0 &  29.8 &  18.9  & 73.3 & 100.0 & 100.0  \\
                      & $F_4$     &   3.9 &  5.8 &   3.6  & 5.4  &  5.4   & 5.6  &    12.0 &  37.8 &  12.3  & 80.4 & 100.0 &  99.9  \\
                      & $F_5$     &   5.4 &  5.1 &   5.9  & 5.9  &  4.7   & 7.6  &    11.8 &  27.2 &   9.4  & 79.5 & 100.0 & 100.0  \\
                      & $F_6$     &   4.1 &  5.1 &   3.8  & 4.7  &  3.7   & 7.1  &     9.2 &  29.0 &   8.8  & 80.5 & 100.0 & 100.0  \\
                      & $F_7$     &   4.9 &  4.9 &   5.1  & 5.5  &  5.0   & 5.4  &    99.8 &  99.2 & 100.0  &100.0 &  65.3 &  49.5 \\
                     \hline
\end{tabular}
\end{center}
}
\end{table}
%\end{landscape}

%\begin{landscape}
\begin{table}[h!]
\caption{Power of  the proposed tests of serial independence for statistics $L_{n,2,5}$ and $L_{n,5,5}$ of Spearman's, van der Waerden's, and Savage's coefficients, for the Farlie-Gumbel-Morgenstern  and Clayton copula families, based on $N=1000$ replications.}\label{tab:fgm+clayton}
{\footnotesize
\begin{center}
\begin{tabular}{ccccccccccccccc}
\hline
& & \multicolumn{6}{c}{FGM} & \multicolumn{6}{c}{Clayton}\\
$n$  & Margin &  \multicolumn{2}{c}{Spearman} & \multicolumn{2}{c}{van der Waerden} & \multicolumn{2}{c}{Savage} &  \multicolumn{2}{c}{Spearman} & \multicolumn{2}{c}{van der Waerden} & \multicolumn{2}{c}{Savage}\\
\cline{3-14}\\
& & $L_{n,2,5}$ & $L_{n,5,5}$ & $L_{n,2,5}$ &  $L_{n,5,5}$ & $L_{n,2,5}$ & $L_{n,5,5}$ & $L_{n,2,5}$ & $L_{n,5,5}$ & $L_{n,2,5}$ &  $L_{n,5,5}$ & $L_{n,2,5}$ & $L_{n,5,5}$\\
\\
\multirow{7}{*}{100}  & $F_1$     &   4.4  & 10.7 &  4.4  & 10.7 &   4.4  & 10.7   &  15.3 &  24.6 & 15.3  & 24.6  & 15.3 &  24.6 \\
                      & $F_2$     &   5.9  & 15.6 &  5.7  & 13.4 &   7.0  &  8.7   &  17.1 &  11.8 & 18.7  & 17.2  & 27.6 &  30.8   \\
                      & $F_3$     &   4.7  & 14.7 &  5.1  & 14.2 &   6.5  &  9.8   &  18.5 &  13.8 & 17.0  & 15.3  & 23.0 &  28.6   \\
                      & $F_4$     &   5.8  & 14.1 &  5.3  & 14.1 &   6.8  & 10.7   &  17.1 &  12.4 & 16.8  & 15.2  & 25.9 &  29.4   \\
                      & $F_5$     &   5.6  & 14.6 &  4.9  & 12.4 &   5.3  &  8.8   &  18.0 &  12.4 & 18.7  & 16.6  & 25.0 &  26.7   \\
                      & $F_6$     &   5.4  & 14.0 &  3.7  & 10.0 &   4.3  &  7.8   &  16.8 &  12.2 & 18.0  & 16.2  & 29.6 &  26.8   \\
                      & $F_7$     &   5.9  & 12.0 &  5.7  & 10.1 &   6.2  & 12.4   &   9.8 &   8.9 &  9.0  &  8.6  & 10.2 &   8.9    \\
                     \hline
 \multirow{7}{*}{250} & $F_1$     &   7.2  & 15.7 &  7.2  & 15.7 &   7.2  & 15.7   &  36.4 &  40.9 & 36.4  & 40.9  & 36.4 &  40.9   \\
                      & $F_2$     &   6.8  & 38.5 &  6.6  & 32.4 &   8.1  & 16.2   &  39.7 &  26.3 & 43.5  & 35.5  & 61.9 &  53.3    \\
                      & $F_3$     &   7.5  & 38.2 &  7.7  & 32.1 &   8.7  & 20.9   &  39.8 &  29.4 & 40.8  & 31.8  & 59.8 &  53.8    \\
                      & $F_4$     &   8.0  & 39.8 &  8.3  & 32.9 &   8.4  & 22.3   &  40.1 &  26.0 & 41.5  & 28.8  & 57.1 &  50.5    \\
                      & $F_5$     &   7.8  & 39.9 &  8.4  & 31.5 &   7.9  & 14.5   &  46.3 &  29.0 & 51.1  & 38.0  & 68.0 &  61.5    \\
                      & $F_6$     &   6.0  & 40.7 &  6.3  & 31.2 &   7.4  & 16.4   &  45.9 &  31.9 & 51.0  & 41.3  & 66.5 &  58.4    \\
                      & $F_7$     &   6.9  & 25.5 &  5.6  & 24.1 &   6.2  & 24.1   &  18.1 &  11.3 & 16.8  & 11.6  & 19.2 &  12.2    \\
                      \hline
 \multirow{7}{*}{500} & $F_1$     &   6.0  & 19.1 &  6.0  & 19.1 &   6.0  & 19.1   &  60.9 &  61.7 & 60.9  & 61.7  & 60.9 &  61.7    \\
                      & $F_2$     &   7.6  & 74.8 &  6.6  & 64.4 &   8.6  & 28.0   &  76.3 &  58.5 & 82.1  & 70.4  & 92.0 &  85.8   \\
                      & $F_3$     &   8.0  & 77.7 &  7.4  & 70.1 &   8.0  & 42.9   &  73.3 &  55.5 & 76.2  & 59.4  & 87.4 &  81.2   \\
                      & $F_4$     &  10.0  & 76.3 &  8.8  & 69.5 &  10.6  & 42.2   &  73.9 &  55.3 & 77.4  & 58.2  & 87.3 &  78.3   \\
                      & $F_5$     &   9.1  & 78.6 &  9.1  & 67.6 &   8.1  & 28.6   &  75.2 &  54.1 & 81.7  & 64.8  & 92.5 &  82.9   \\
                      & $F_6$     &   9.2  & 77.5 &  9.7  & 67.7 &   9.2  & 29.2   &  74.3 &  53.9 & 80.3  & 66.8  & 91.1 &  83.3   \\
                      & $F_7$     &   8.4  & 58.0 &  6.6  & 52.0 &   8.2  & 51.2   &  39.3 &  23.4 & 33.4  & 22.0  & 41.2 &  23.5  \\
                     \hline
\end{tabular}
\end{center}
}
\end{table}

\begin{table}[h!]
\caption{Power of  the proposed tests of serial independence for statistics $L_{n,2,5}$ and $L_{n,5,5}$ of Spearman's, van der Waerden's, and Savage's coefficients, for the Gaussian  and Frank copula families, based on $N=1000$ replications.}\label{tab:gaussian+frank}
{\footnotesize
\begin{center}
\begin{tabular}{ccccccccccccccc}
\hline
& & \multicolumn{6}{c}{Gaussian} & \multicolumn{6}{c}{Frank}\\
$n$  & Margin &  \multicolumn{2}{c}{Spearman} & \multicolumn{2}{c}{van der Waerden} & \multicolumn{2}{c}{Savage} &  \multicolumn{2}{c}{Spearman} & \multicolumn{2}{c}{van der Waerden} & \multicolumn{2}{c}{Savage}\\
\cline{3-14}\\
& & $L_{n,2,5}$ & $L_{n,5,5}$ & $L_{n,2,5}$ &  $L_{n,5,5}$ & $L_{n,2,5}$ & $L_{n,5,5}$ & $L_{n,2,5}$ & $L_{n,5,5}$ & $L_{n,2,5}$ &  $L_{n,5,5}$ & $L_{n,2,5}$ & $L_{n,5,5}$\\
\\
\multirow{7}{*}{100}  & $F_1$     &    8.7 &  14.4  &  8.7 &  14.4  &  8.7 &  14.4   &     8.4 &  16.2  &  8.4 & 16.2 &   8.4  & 16.2  \\
                      & $F_2$     &   16.2 &  10.4  & 16.4 &   9.9  & 11.4 &  12.9   &    17.3 &  10.6  & 15.9 &  8.6 &  11.5  & 11.6    \\
                      & $F_3$     &   16.0 &  10.5  & 15.8 &  13.2  & 11.8 &  13.0   &    15.2 &  10.0  & 13.9 &  9.9 &  10.9  & 12.4    \\
                      & $F_4$     &   14.8 &   9.6  & 14.5 &  11.2  & 13.8 &  14.5   &    16.6 &   8.8  & 14.4 &  9.6 &  11.4  & 12.1    \\
                      & $F_5$     &   15.3 &  11.5  & 15.8 &  11.6  & 12.8 &  11.7   &    16.3 &  10.9  & 14.4 & 11.6 &  12.1  & 12.2    \\
                      & $F_6$     &   13.1 &   9.3  & 14.1 &  11.3  & 10.5 &  12.3   &    15.8 &   8.7  & 13.9 & 10.2 &   9.4  &  9.4    \\
                      & $F_7$     &   12.4 &   7.9  & 13.1 &  12.1  & 11.6 &   6.4   &    13.1 &   9.7  & 13.1 &  9.8 &  13.4  &  8.9     \\
                     \hline
 \multirow{7}{*}{250} & $F_1$     &   14.5 &  19.4  & 14.5 &  19.4  & 14.5 &  19.4   &    14.2 &  21.5  & 14.2 & 21.5 &  14.2  & 21.5    \\
                      & $F_2$     &   40.2 &  21.7  & 41.3 &  23.6  & 31.2 &  24.3   &    38.4 &  22.6  & 33.0 & 22.6 &  24.5  & 18.5     \\
                      & $F_3$     &   41.5 &  23.8  & 43.3 &  25.9  & 31.1 &  25.6   &    40.9 &  23.2  & 38.4 & 22.5 &  28.0  & 20.8     \\
                      & $F_4$     &   40.6 &  22.5  & 42.7 &  24.5  & 32.7 &  25.1   &    43.6 &  24.6  & 39.7 & 23.1 &  29.2  & 22.2     \\
                      & $F_5$     &   42.4 &  26.5  & 44.2 &  27.9  & 31.5 &  27.1   &    40.8 &  22.4  & 36.8 & 21.4 &  22.7  & 17.2     \\
                      & $F_6$     &   38.4 &  22.9  & 43.4 &  26.6  & 32.3 &  25.1   &    42.4 &  24.4  & 38.5 & 20.9 &  24.1  & 16.0     \\
                      & $F_7$     &   31.4 &  20.9  & 31.8 &  25.2  & 27.3 &  17.3   &    32.4 &  19.8  & 32.3 & 19.8 &  29.5  & 18.1     \\
                      \hline
 \multirow{7}{*}{500} & $F_1$     &   27.2 &  28.8  & 27.2 &  28.8  & 27.2 &  28.8   &    23.8 &  26.5  & 23.8 & 26.5 &  23.8  & 26.5     \\
                      & $F_2$     &   73.3 &  49.7  & 78.1 &  56.2  & 60.1 &  42.9   &    76.2 &  48.0  & 71.0 & 41.9 &  47.5  & 31.2    \\
                      & $F_3$     &   76.0 &  50.5  & 77.8 &  55.4  & 61.1 &  42.6   &    74.6 &  50.1  & 72.4 & 45.6 &  54.5  & 34.3    \\
                      & $F_4$     &   73.8 &  48.6  & 77.0 &  52.6  & 63.3 &  44.2   &    75.2 &  52.6  & 71.7 & 48.7 &  54.6  & 38.3    \\
                      & $F_5$     &   75.9 &  51.3  & 79.5 &  56.5  & 60.9 &  39.5   &    73.4 &  49.7  & 69.3 & 43.5 &  47.6  & 27.3    \\
                      & $F_6$     &   75.5 &  52.9  & 80.0 &  55.1  & 63.2 &  43.0   &    74.3 &  46.7  & 68.3 & 43.4 &  46.3  & 29.1    \\
                      & $F_7$     &   56.8 &  36.5  & 59.5 &  42.5  & 48.7 &  30.6   &    61.3 &  41.3  & 58.8 & 39.4 &  58.6  & 36.5   \\
                     \hline
\end{tabular}
\end{center}
}
\end{table}

\end{document}